
\pdfoutput=1
\setcounter{secnumdepth}{1}

\documentclass[12pt, draftclsnofoot, onecolumn]{IEEEtran}
\hyphenation{op-tical net-works semi-conduc-tor}
\usepackage{amsmath}
\usepackage{amsthm}
\usepackage{amsfonts}
\usepackage{amssymb}
\usepackage{mathrsfs}
\usepackage{mathdots}
\usepackage{multirow}
\usepackage{cases}
\usepackage{cite}
\usepackage{graphicx}
\usepackage{epstopdf}
\usepackage{mathrsfs}
\usepackage{subfigure}
\usepackage{epstopdf}
\usepackage{color}
\usepackage{bm}
\usepackage{caption}
\usepackage{booktabs}
\usepackage{stfloats}
\allowdisplaybreaks[1]
\theoremstyle{remark}
\newtheorem{theorem}{\hspace{1em}Theorem}
\newtheorem{lemma}{\hspace{1em}Lemma}

\begin{document}
\bibliographystyle{IEEEtran}

\title{Reliable and Secure Short-Packet Communications
\thanks{C. Feng and H.-M. Wang are with the School of Electronic and Information
Engineering, Xi'an Jiaotong University, Xi'an 710049, China, and also with the Ministry of Education Key Laboratory for Intelligent Networks and Network Security, Xi'an Jiaotong University, Xi'an, 710049, Shaanxi, P. R. China. (Email: {fengxjtu@163.com, xjbswhm@gmail.com}).}
\thanks{H. V. Poor is with the Department of Electrical Engineering, Princeton University, Princeton, NJ 08544, USA. (Email:
poor@princeton.edu).
}
\author{Chen Feng,~Hui-Ming Wang,~\IEEEmembership{Senior Member,~IEEE},~and~H. Vincent Poor,~\IEEEmembership{Life Fellow,~IEEE}
}
}
\maketitle
\vspace{-12mm}
\begin{abstract}
Exploiting short packets for communications is one of the key technologies for realizing emerging application scenarios such as massive machine type communications (mMTC) and ultra-reliable low-latency communications (uRLLC).  In this paper, we investigate short-packet communications to provide both reliability and security guarantees simultaneously with an eavesdropper. In particular, an outage probability considering both reliability and secrecy is defined according to the characteristics of short-packet transmission, while the effective throughput in the sense of outage is established as the performance metric. Specifically, a general analytical framework is proposed to approximate the outage probability and effective throughput. Furthermore, closed-form expressions for these quantities are derived for the high signal-to-noise ratio (SNR) regime. Both effective throughput obtained via a general analytical framework and a high-SNR approximation are maximized under an outage-probability constraint by searching for the optimal blocklength.  Numerical results verify the feasibility and accuracy of the proposed analytical framework, and illustrate the influence of the main system parameters on the blocklength and system performance under the outage-probability constraint.

\end{abstract}

\begin{IEEEkeywords}
Short-packet communications, Internet-of-Things, outage probability, physical layer security, reliable and secure transmission.
\end{IEEEkeywords}

\section{Introduction}
\label{Sec:Introduction}
Driven by potential Internet-of-Things (IoT) applications, the fifth-generation (5G) mobile communications system
introduces two novel types of typical scenarios, mMTC and uRLLC. Unlike previous wireless communication systems  pursuing higher transmission rates and higher capacity,  such as 4G long term evolution (LTE) and WiFi, these new service categories put forward higher design requirements for 5G in terms of delay, energy efficiency, reliability, flexibility and connection density \cite{Ji2018Ultra}.
One critical challenge lies in the requirement that 5G must support short-packet transmission, which requires a fundamentally different design approach from that used in current high data rate systems  \cite{durisi2016toward}.

Especially, using short packets for transmission will result in a severe reduction with channel coding gain, making it challenging to ensure communication reliability. In many critical IoT applications, such as autonomous driving, high transmission reliability is a mandatory requirement. Thus, providing highly reliable transmission is one of the key challenges of applications using short-packet communications.

On the other hand, security is also a critical attribute that should be addressed in short-packet communications.  Whether it is the uRLLC scenario for intelligent transportation, industrial control, and telemedicine \cite{Schulz2017Latency}, or the low-cost and low-energy consumption mMTC scenario \cite{Bockelmann2016Massive} with massive numbers of sensors and wearable devices, they are all faced with more severe secrecy challenges than some other communication systems because of the mission-criticality of many IoT applications and because IoT communications often contain private data\cite{Jing2014Security,Shen2020Security}. Taking intelligent transportation as an example, if a message is eavesdropped upon, private information such as user identity or location information may be exposed. Typically the security of communication systems relies on upper-layer encryption, which encounters significant challenges in IoT systems for the following reasons:
\begin{itemize}
\item
Envisioned IoT applications will deploy massive numbers of terminals in many highly dynamic heterogeneous networks, which makes key management very challenging, and is not scalable in terms of key negotiation, distribution, and updating \cite{Morabito2010The,Poor2012Information,2016M2M}.
\end{itemize}
\begin{itemize}
\item
The Resource-constrained IoT nodes have to satisfy practical constraints such as limited energy, so lightweight secrecy protocols are often used to reduce resource requirements at the expense of system reliability, and even security \cite{Zhou2018Security}.
\end{itemize}
\begin{itemize}
\item
A communication mode characterized by numerous sporadic and short packets is used in uRLLC and mMTC applications \cite{durisi2016toward,Poor2012Information,poor2019fundamentals}. The frequent interaction required by traditional security protocols will create considerable additional overhead and consume excessive resources at IoT nodes.
\end{itemize}

Physical layer security (PLS) is a potential candidate for countering the above concerns.  PLS can use the inherent characteristics of  wireless channel to improve the confidentiality of transmissions in the physical layer without relying on a secret key \cite{Bloch2011,liang2009information,Poor2017Wireless}. 
Compared with the security mechanisms based on upper-layer encryption, PLS has lower implementation complexity, more flexible design, and lower dependence on infrastructure \cite{Qi2020PLS}. In recent years, PLS has emerged as a promising complementary and alternative means to ensure wireless transmission security \cite{Nan2015Safeguarding,chen2019,HMwang,HMwangandTXZheng,HMwangandYQ}.

The analysis of reliable and secure performance, as well as corresponding system designs for various communication systems within an infinite blocklength assumption, have been extensively investigated from the perspective of the physical layer \cite{Bloch2011,liang2009information,Poor2017Wireless,Qi2020PLS,Nan2015Safeguarding,chen2019,HMwang,HMwangandTXZheng,HMwangandYQ}. However, the assumption of infinite blocklength is obviously no longer suitable for short-packet transmission in IoT applications. Limited blocklength will introduce performance losses,  for both reliability and security, which is undoubtedly a significant challenge for future applications that require high reliability and security simultaneously, such as in autonomous driving and remote surgery. In addition, using a Shannon information-theoretic framework based on infinite blocklength to analyze the performance of short-packet communication systems may give rise to inaccurate conclusions. Therefore, it is desirable to reconsider the analysis and design of reliable and secure transmissions under finite blocklength from the perspective of the physical layer.

\subsection{Related Work and Motivation}
Over the past decade, the reliable performance of short-packet communication systems has been studied from an information-theoretic perspective. In particular, in \cite{Polyanskiy2010Channel} the maximum achievable channel coding rate under a given fixed blocklength and error probability has been studied in additive white Gaussian noise (AWGN) channels. Unlike in Shannon's asymptotic model, the transmission error probability cannot be arbitrarily low in the case of finite blocklength, while the maximum transmission rate of the system is significantly reduced as well. In subsequent works, bounds on the maximum achievable channel coding rate are analyzed under more specific system models \cite{Yang2013Quasi,yang2014multi,Durisi2016Short} or other channel assumptions \cite{2020saddle}. The basic information-theoretic results in \cite{Polyanskiy2010Channel} have been used to analyze the reliability of various communication systems within finite blocklength, such as two-way relaying \cite{Gu2018Short} and non-orthogonal multiple access (NOMA) system \cite{Yu2018On}, etc. However, none of the above works has taken the confidentiality of wireless communications into consideration.

As for the physical layer security of short-packet communications, some works have studied the secrecy performance with finite blocklength from the information-theoretic perspective. In \cite{Hayashi2006General}, a general achievability bound of the secrecy rate in the presence of an eavesdropper is obtained by using a channel resolution technique. Later research developed improved achievability and converse boundaries for wiretap channels with security metrics \cite{Yassaee2013Non,Tan2013Achievable}. Bounds on the coding rate for the discrete memoryless eavesdropping channel and Gaussian eavesdropping channel are derived under a given blocklength, error probability and information leakage in \cite{Wei2016Finite}, which are tighter than those in \cite{Yassaee2013Non,Tan2013Achievable}. The above issues have been reconsidered in \cite{Wei2017Secrecy}  based on a semi-deterministic eavesdropping channel model, while \cite{Yang2017Wiretap} summarizes and extends the studies \cite{Wei2016Finite,Wei2017Secrecy}. Based on this information-theoretic progress, in \cite{yang2019}, we proposed an analytical framework to approximate the average achievable secrecy throughput of a fading wiretap system with finite blocklength coding. The throughput obtained there is based on the ergodic assumption rather than in the sense of outage.

In fact, for various IoT applications, a significant portion of  traffic consists of bursty short packets transmissions. 
Therefore, performance metrics in the sense of outage is more appropriate. The definition of outage probability for systems with infinite blocklength has been given and applied to system performance analysis\cite{Bloch2008Wireless}.
Nevertheless, this definition cannot be applied for finite blocklength communication systems, as the use of short packets leads to an inevitable decoding error probability and information leakage. Notably, outage-based performance evaluation for secure short-packet communications will confront new challenges due to the complex nature of secure short-packet information theoretic results.

To the best of the authors' knowledge, no outage-based metric considering both reliability and security is available for analyzing the performance of secure short-packet transmission systems, nor has there been a systematic analytical framework for the outage probability and throughput evaluation of such systems. How to design an efficient short-packet transmission scheme to provide both reliability and security guarantees, and what are the impacts of system parameters such as blocklength on the trade-off between transmission rate and reliable-secure performance are still open issues, which motivate this work.

\subsection{Our Contributions}
This paper investigates the effective throughput of short-packet communication systems with eavesdroppers under outage probability constraints. Specifically, the novelty and major contributions of this paper can be summarized as follows.


\begin{itemize}
\item
Based on the characteristics of short-packet communications, the definitions of outage probability and effective throughput are established, which are used to evaluate the  performance of short-packet communication systems in a way that can \emph{guarantee reliability and security simultaneously}.
\end{itemize}
\begin{itemize}
\item
A general analytical framework is established using a distribution approximation technique that can be applied to analyze and optimize the effective throughput of short-packet communication systems with outage probability constraints. A numerical technique for determining the optimal blocklength is proposed.
\end{itemize}
\begin{itemize}
\item
For the high SNR domain, closed-form expressions for the outage probability and effective throughput for short-packet communication systems are derived, and the optimal blocklength is also obtained  analytically. The impacts of system parameters and outage probability constraints on the optimal blocklength and effective throughput are further analyzed.
\end{itemize}
\begin{itemize}
\item
 Numerical results are provided to verify the feasibility and accuracy of the proposed analytical framework. Several useful insights into the blocklength design for reliable and secure short-packet transmission are revealed. For example, the optimal blocklength and system performance will eventually stop changing with increasing SNR in high SNR domain.
\end{itemize}
\subsection{Organization and Notation}
The rest of this paper is organized as follows: In Section \ref{SystemModel}, we present the system model and the performance metrics for reliable and secure short-packet transmission. The analytical framework for determining the optimal blocklength and effective throughput is given under general system parameters in Section \ref{III}. In Section \ref{IV}, the high SNR case is studied in more detail.
The corresponding numerical simulation results are given in Section \ref{NumericalSimulation}. Section \ref{Conclusion} summarizes the paper and gives the main conclusions.

We use the following notation in this paper: $\bm{a}^T$, $\bm{a}^*$, $|\bm{a}|$, $\left\|\bm{a}\right\|$ denote transpose, conjugate, 1-norm and Euclidean norm of a vector $\bm{a}$, respectively. E$\left\{\cdot\right\}$ and Pr$\left(\cdot\right)$ denote expectation and probability of a random variable, and $\bm{I}_K$ denotes a $K$-dimensional identity matrix. $\mathcal{CN}\left(\mu,\sigma^2\right)$, $Exp\left(\lambda\right)$, and $Gamma\left(k,\theta\right)$ denote a circularly
symmetric complex Gaussian distribution with mean $\mu$ and covariance $\sigma^2$, an exponential distribution with parameter $\lambda$, and a gamma distribution with parameters $k$ and $\theta$, respectively. $Q^{-1}\left(\cdot\right)$ is the inverse of the Gaussian Q-function, $\mathrm{Ei}(x)$ is the exponential integral function, $\psi(x)$ is the Euler psi function, $\zeta(x,q)$ is the Riemann's zeta function and $C$ is Euler's constant.
The ceiling and floor operations are denoted by $\lceil\cdot\rceil$ and $\lfloor\cdot\rfloor$, respectively.

\section{System Model and Problem Description}
\label{SystemModel}
\subsection{System Model}
\label{MISOsystem}
We consider a downlink short-packet transmission system consisting of three nodes as depicted in Fig. \ref{fig1}, in which Alice transmits confidential information to Bob while an eavesdropper Eve attempts to wiretap the ongoing transmission. For IoT applications, the access points usually have abundant resources, whereas the IoT nodes are size and power constrained. Therefore, we assume that Alice and Bob are equipped with $K$ antennas and a single antenna, respectively. In this paper, we consider the case when Eve has a single antenna as well for simplification, and our discussion can be generalized to the multiple antenna case for Eve.

In this paper, quasi-static Rayleigh fading channels are considered, where the channel coefficients remain constant during an entire short-packet communication, and are independently and identically distributed (i.i.d.) among different packets. Accordingly, the channels from Alice to Bob and Eve are represented by $\bm{h}_b\sim\mathcal{CN}\left(\bm{0},\bm{I}_K\right)$ and $\bm{h}_e \sim\mathcal{CN}\left(\bm{0},\bm{I}_K\right)$, respectively. Additionally, AWGN is taken into account for all the channels.

\label{Sec:SystemModelandProblemDescription}
\begin{figure}[!t]
\begin{center}
\includegraphics[width=2.2 in]{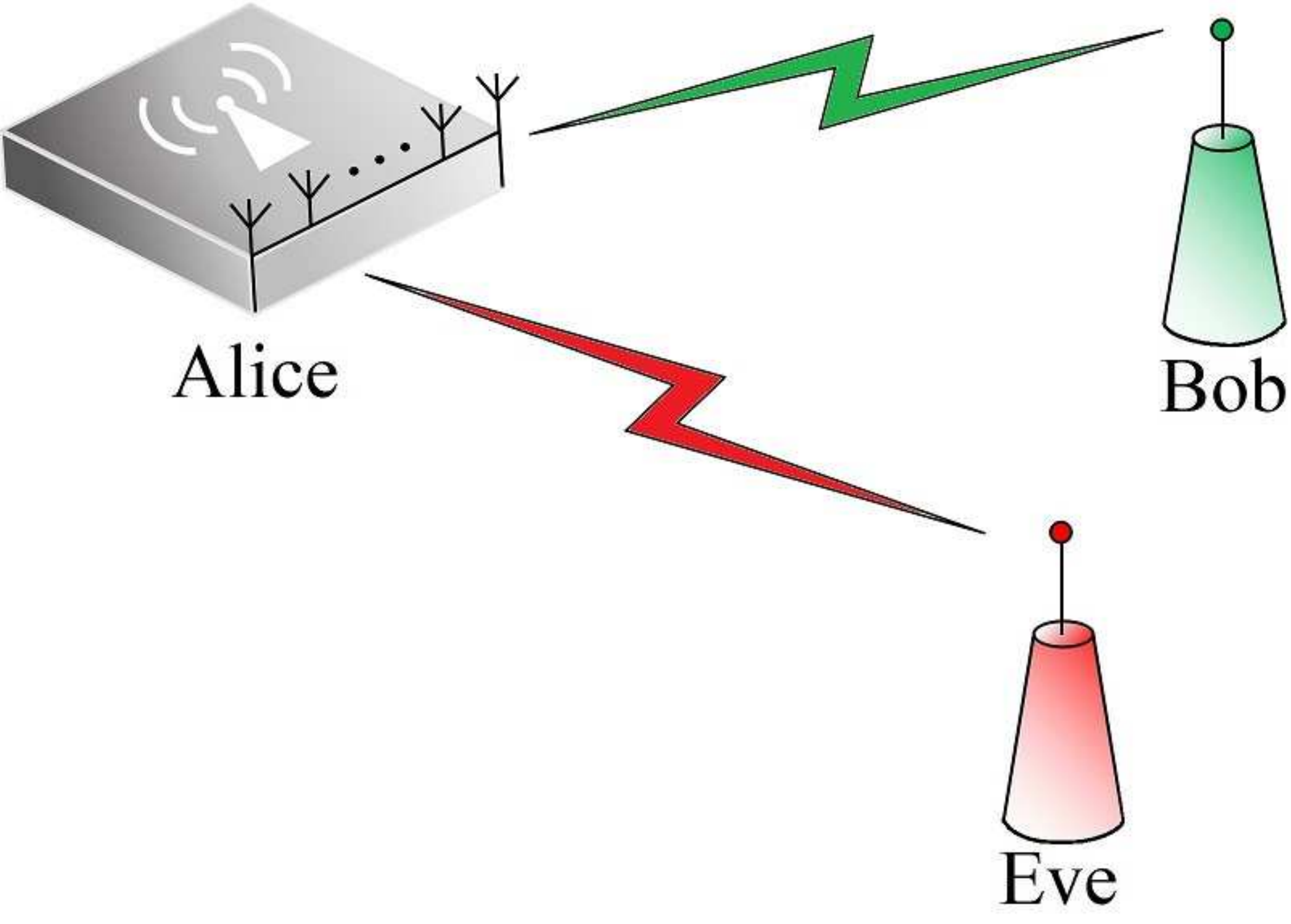}
\vspace{-2mm}
\caption{System Model.}\label{fig1}
\end{center}
\vspace{-10mm}
\end{figure}
Furthermore, it is assumed that $\bm{h}_b$ can be known by Alice through reverse training and channel reciprocity while only the distribution of $\bm{h}_e$ is available at Alice. Accordingly, Alice leverages maximal-ratio transmission (MRT) beamforming scheme with the beamforming vector $w=\bm{h}_b^*/\left\| \bm{h}_b\right\|$. Denoting the information-bearing signal by $\sqrt{P_t}s$ with E$\left\{|s^2|\right\}=1$ and transmit power $P_t$, the received signals at Bob and Eve are given by
\begin{align}
y_b=\bm{h}_b^T\bm{x}+n_b=\sqrt{P_t}\bm{h}_b^T\bm{w}s+n_b,
\label{equation5}
\end{align}
\begin{align}
y_e=\bm{h}_e^T\bm{x}+n_e=\sqrt{P_t}\bm{h}_e^T\bm{w}s+n_e,
\label{equation6}
\end{align}
where $\pmb x\triangleq \sqrt{P_t}\bm{w}s$ is the transmit signal, $n_b\sim\mathcal{CN}\left(0,\sigma_b^2\right)$ and $n_e\sim\mathcal{CN}\left(0,\sigma_e^2\right)$ accounts for the AWGN at Bob and Eve, respectively. The corresponding SNRs can be formulated as
\begin{align}
\gamma_b=\frac{P_t\left\|\bm{h}_b^T\bm{w}\right\|^2}{\sigma_b^2}=\overline{\gamma}_b\left\| \bm{h}_b\right\|^2,
\label{equation7}
\end{align}
\begin{align}
\gamma_e=\frac{P_t\left\|\bm{h}_e^T\bm{w}\right\|^2}{\sigma_e^2}=\overline{\gamma}_e\left\|\bm{h}_e^T\frac{\bm{h}_b^*}{\left\| \bm{h}_b\right\|}\right\|^2,
\label{equation8}
\end{align}
where $\overline{\gamma}_b\triangleq P_t /\sigma_b^2$ ,$\overline{\gamma}_e\triangleq P_t /\sigma_e^2$. From (\ref{equation7}) and (\ref{equation8}), we immediately have $\gamma_b\sim Gamma\left(K,\overline{\gamma}_b\right)$ and $\gamma_e\sim Exp\left(1/\overline{\gamma}_e\right)$. In this paper, we assume that the noise power is equal at Bob and Eve, i.e. $\sigma_b^2=\sigma_e^2$, then $\overline{\gamma}_b=\overline{\gamma}_e=\overline{\gamma}$.

\subsection{Finite Blocklength vs. Infinite Blocklength}
Traditional wireless communication systems usually take ``capacity'' as the critical performance metric, which has a standard assumption that channel coding with a sufficient long  (infinite) blocklength should be used to achieve the capacity. Similarly, secrecy capacity, the maximum transmission rate that legitimate transmit-receive ends can achieve when they communicate in a completely secure and reliable way\footnote{In the sense that the error probability and the information leakage can be made
as small as desired, as long as the transmission rate is below
the secrecy capacity.}, is utilized to evaluate the secrecy performance of a wiretap channel. In \cite{Bloch2008Wireless}, it has been strictly proved that in a quasi-static  fading wiretap channel, a positive secrecy rate exists when Bob's channel capacity $C_b$ is larger than Eve's channel capacity $C_e$. In this case, the closed-form expression for the secrecy capacity can be represented as
\begin{align}
C_s= C_b-C_e =\log_2\left(1+\gamma_b\right)-\log_2\left(1+\gamma_e\right), \label{Cs}
\end{align}
where  $\gamma_b$, $\gamma_e$ denote the SNRs at Bob and Eve respectively.
This rate can be achieved by the well-known Wyner's wiretap coding scheme for infinite blocklength.
The scheme involves two rates, namely, codeword rate $R_b$, and secrecy rate $R_s$ separately. The rate difference, i.e.,
\begin{equation}
R_e = R_b -R_s, \label{Rs_}
\end{equation}
called rate redundancy, reflects the rate cost of securing the message transmission against eavesdropping\cite{Wyner1975,Zhou2011Rethinking}.  For systems with infinite blocklength, both $R_b$ and $R_e$ can be designed separately to achieve a possible secrecy rate $R_s$ that is close to $C_s$.

However, secrecy capacity based on infinite blocklength are no longer suitable for the analysis of short-packet communications in uRLLC and mMTC scenarios. There is considerable interest in investigating the secrecy capacity incurred by
coding at a finite blocklength. 
%
The maximal secrecy transmission rate $R^*(N,\epsilon,\delta)$ for a given blocklength $N$, a decoding error probability $\epsilon$, and an information leakage $\delta$, over wiretap channel has been investigated very recently in \cite{Yang2017Wiretap}. Specifically, the achievability and converse bounds on the second-order coding rate of quasi-static Rayleigh fading wiretap channel can be represented as
\begin{align}
C_s-\sqrt{\frac{V_1}{N}}\frac{Q^{-1}\left(\epsilon\right)}{\ln2}-\sqrt{\frac{V_2}{N}}\frac{Q^{-1}\left(\delta\right)}{\ln2}\mathop {<}_{\approx} R^*(N,\epsilon,\delta)\mathop {<}_{\approx}C_s-\sqrt{\frac{V_3}{N}}\frac{Q^{-1}\left(\epsilon+\delta\right)}{\ln2},
\label{equation_up_and_down}
\end{align}
where $V_1\triangleq 1-\left(1+\gamma_b\right)^{-2}$, $V_2\triangleq 1-\left(1+\gamma_e\right)^{-2}$, and $V_3\triangleq V_1+V_2-2\frac{1+\gamma_e}{1+\gamma_b}V_2$ are constants that depend on the stochastic variations of the legitimate's and the eavesdropper's channel\footnote{The equation in [\cite{Yang2017Wiretap}, eq. (115-118)] is adapted here for the complex-valued channel by noting that the equivalent blocklength gets doubled compared with the real-valued one with the same SNR.}. For a packet with blocklength $N$, if the probability of decoding error at Bob is not larger than $\epsilon$ while the information leakage does not exceed $\delta$, the maximum transmission rate of Rayleigh fading wiretap channel should be within the range shown in (\ref{equation_up_and_down}). Compared with the infinite blocklength case, the finite blocklength $N$ leads to a loss of reliability and secrecy, and the perfect reliability and secrecy of short-packet transmission cannot be guaranteed. It is worth noting that the upper and lower bounds of $R^*(N,\epsilon,\delta)$ will converge to $C_s$ together when $N$ approaches infinity.

The lower bound of $R^*(N,\epsilon,\delta)$ in (\ref{equation_up_and_down}) is achievable under the constraints of reliability and secrecy, which will be exploited in the following analysis to approximate the upper bound of outage probability. This achievable secrecy rate is
\begin{align}
\bar R_s=C_s-\sqrt{\frac{V_b}{N}}\frac{Q^{-1}\left(\epsilon\right)}{\ln2}-\sqrt{\frac{V_e}{N}}\frac{Q^{-1}\left(\delta\right)}{\ln2},
\label{equation2}
\end{align}
where $V_i\triangleq 1-\left(1+\gamma_i\right)^{-2}, i\in\{b, e\}$. The eq.
(\ref{equation2}) implies that there exists a wiretap code with blocklength $N$ and coding rate $\bar R_s$, such that the probability of decoding error at Bob is not larger than $\epsilon$ while the information leakage does not exceed $\delta$.
Reformulating (\ref{equation2}), we have
\begin{equation}
\bar R_s= \Re_b-\Re_e, \label{Rs}
\end{equation}
with
\begin{align}
&\Re_b\triangleq\log_2\left(1+\gamma_b\right)-P\sqrt{V_b},\label{equationRb}\\
&\Re_e\triangleq\log_2\left(1+\gamma_e\right)+Q\sqrt{V_e}.\label{equationRe}
\end{align}
where $P=\frac{Q^{-1}\left(\epsilon\right)}{\ln2\sqrt{N}}$, $Q\triangleq\frac{Q^{-1}\left(\delta\right)}{\ln2\sqrt{N}}$. We see that (\ref{Rs}) has a similar form as (\ref{Cs}) and (\ref{Rs_}). Furthermore, $\Re_b$ is only related to the legitimate channel $\gamma_b$ and the reliable constraint $\epsilon$, whereas $\Re_e$ is related to the wiretap channel
 $\gamma_e$ and the secure constraint $\delta$.
Before proceeding, we have to make some important observations on the  differences 
between finite blocklength and infinite blocklength.

We should emphasize that (9) merely expresses $R_s$ as two terms in a mathematical form. The $\Re_b$ in (\ref{equationRb}) and $\Re_e$ in (\ref{equationRe})  represent the components related to reliable transmission constraint and information leakage constraint in $\bar R_s$, separately, which are different from the main channel rate and eavesdropping channel rate in the typical transmission system with infinite blocklength.

When infinite blocklength coding scheme is adopted, an accurate and unique secrecy capacity $C_s$ can be obtained for a given channel realization. With Wyner's wiretap coding scheme, the codeword rate and rate redundancy can be designed and optimized \emph{separately} to achieve a possible secrecy rate $R_s$ that close to $C_s$, which has been extensively studied in \cite{2013CodingforSecrecy,Error-ControlCoding}. However, for short-packet transmission with finite blocklength, Wyner's wiretap coding scheme may not hold anymore, which means that there is no separable codeword rate and redundancy rate.
Instead, we have to design a coding rate $R_0$ to approach the achievable secrecy rate $\bar R_s$ in (\ref{Rs}) directly, although $\bar R_s$ can be decomposed into the forms of $\Re_b$ and $\Re_e$ as shown in (\ref{equationRb}) and (\ref{equationRe}), respectively. This conclusion is important, as we will see in our following derivations.

Finally, for systems within infinite blocklength, when the capacity of main channel is larger than that of eavesdropping channel, there always exists a coding scheme that can guarantee a \emph{perfectly} reliable and secure transmission (decoding error and information leakage approach zero), simultaneously. On the contrary, for short-packet communications, due to the penalty introduced by the limited blocklength, perfect reliability and security cannot be guaranteed anymore. We can only analyze the system performance under a certain reliable constraint (decoding error at Bob is not larger than $\epsilon$) and a secure constraint (the information leakage does not exceed $\delta$). This observation has a significate impact on the definition of outage event and the analysis on the outage probability, as will be discussed in the following subsection.

\subsection{Outage Probability for Short-Packet Communications}
In this subsection, we give the definition of outage event and analyze the outage
probability for systems with finite blocklength. 
According to the discussion in Section II-B, perfect reliability and security cannot be guaranteed, thus we have to measure the system performance under the restrictions of decoding error probability $\epsilon$ and information leakage probability $\delta$. Based on this observation, we define the outage event as the case that the transmission with current rate $R_0$ and blocklength $N$ violates either a preset reliable constraint $\bar \epsilon$ or a preset secure constraint $\bar \delta$.

Specifically, assume that $B$ bits confidential information is transmitted during $N$ channel uses. In other words, the confidential information is transmitted by Alice at a rate of $R_0=B/N$. After modulation, the confidential signal will go through the legitimate and wiretap channels. For each realization of the fading coefficients $\{\bm{h}_b, \bm{h}_e\}$, the quasi-static fading channels can be viewed as AWGN channels with fixed channel gains, and the current achievable secrecy rate $\bar R_s$ under the preset constraints of reliability and security $\{\bar \epsilon, \bar \delta\}$ can be calculated by substituting the instantaneous SNRs into (\ref{equation2}). However, since Alice does not know the exact $\bar R_s$ because of the unknown instantaneous SNR $\gamma_e$, we will encounter two cases:

In case of $R_0\leq \bar R_s$, which means that the current coding rate is less than or equal to the achievable secrecy rate, both the preset constraints of reliability $\bar \epsilon$ and security $\bar \delta$ can be satisfied.

In case of $R_0> \bar R_s$, which means that the current coding rate exceeds the achievable secrecy rate, either the preset reliable constraint $\bar \epsilon$ or the preset secure constraint $\bar \delta$ is invalid. It is reasonable to define such an event as an \emph{outage event}.
Due to the time variation of the channels and consequently the randomness of $\bar R_s$, for any coding rate $R_0$, the outage event is a probability event with the outage probability defined as
\begin{align}
p_{out}(\bar \epsilon, \bar \delta )=Pr\left(\bar R_s\leq R_0\right).
\label{equation3}
\end{align}
Recalling $\bar R_s$ in (\ref{Rs}), the outage probability can be interpreted as the probability that the main channel gain is too small or the eavesdropping channel gain is too large to allow for a communication satisfying the reliable and secure constraints simultaneously with current rate $R_0$.

\subsection{Effective Throughput}
To measure the transmission performance, we establish the concept of effective throughput $T$, which measures the average effectively received information bits per channel use subject to an outage constraint $p_{out}\leq\zeta$, where $\zeta\in[0,1]$ is a pre-established threshold of outage probability\footnote{Generally, the outage probability should be no more than 0.5, i.e. $\zeta\leq 0.5$.
}.
Specifically, the \emph{effective} transmission refers to the transmission that satisfies both the reliable and secure constraints, which has a probability of $(1-p_{out})$. Since each short packet of blocklength $N$ contains $B$ information
bits in total, the effective throughput $T$ can be formulated as
\begin{align}
T=R_0\left(1-p_{out}\right)=\frac{B}{N}\left(1-p_{out}\right)\qquad \mathrm {s.t.} \quad p_{out}\leq\zeta.
\label{equation4}
\end{align}

For each short-packet transmission, the number of confidential information bits $B$ should be fixed, which is determined by the specific requirement in a certain 
application scenario. After secrecy coding,  the number of coded bits is $N$, which actually captures the corresponding
physical-layer transmission latency measured in channel
uses and should be carefully designed.

It can be observed from (\ref{equation2}) that a larger $N$ can increase the achievable secrecy rate $R_s$, while the transmission rate $R_0$ decreases as $N$ increases. Accordingly, a larger $N$ will result in a smaller $p_{out}$ according to (\ref{equation3}). Obviously, the choice of $N$ actually implies a tradeoff between transmission rate and reliable-secure performance. Therefore, it is important to investigate the impact of $N$ on the effective throughput defined in (\ref{equation4}). The value of $N$ should be carefully chosen so as to maximize the effective throughput under the upper bound constraint on the blocklength $N\le N_G$.
Accordingly, the optimization problem is
\begin{subequations}
	\begin{align}
	&\mathop {\mathrm {maximize}}_{N}~T \label{equation13a}\\
	&\mathrm {s.t.}~ p_{out}\leq\zeta, \label{equation13b}\\
	&\hphantom { \mathrm {s.t.}~}N\leq N_G \text{ with } N\in \mathbb {N}_{+}. \label{equation13c}
	\end{align}
	\label{equation22}
\end{subequations}
According to the definition of $p_{out}$  in (\ref{equation3}), the constraint (\ref{equation13b}) is a function of $R_0$, which is also related to $N$. Therefore, the objective function and the constraints are coupled. To solve the problem, the first task is to derive an explicit expression of the outage probability $p_{out}$. However, we will show that it is a non-trivial task.
In the subsequent section, we will propose a distribution approximation technique to address this issue and then solve the optimization problem.

\section{A general analysis and optimization framework}
\label{III}
For any given $R_0$, a prerequisite for acquiring the outage probability in (\ref{equation3}) is to obtain the explicit expression of the probability density function (PDF) of the achievable secrecy rate $\bar R_s$, denoted by $f_{\bar R_s}(r)$. However, according to (\ref{Rs})-(\ref{equationRe}), we see that for any preset $\bar \epsilon$ and $\bar \delta$,\footnote{The preset decoding error probability $\bar \epsilon$, and the information leakage $\bar \delta$ should be no more than 0.5, i.e. $\bar\epsilon\leq 0.5$, $\bar\delta\leq0.5$.}  $\Re_b$ and $\Re_e$ are both complicated functions of $\gamma_b$ and $\gamma_e$ ($V_b$ and $V_e$ are also functions of $\gamma_b$ and $\gamma_e$, respectively), making it challenging to derive $f_{\bar  R_s}\left(r\right)$ from the distributions of $\gamma_b$ and $\gamma_e$ directly.

In this section, we propose to use a distributed approximation technique to address this issue. Specifically, a distribution model with a known expression and some unknown parameters is used to fit the original distribution of $\bar R_s$. Calculate the unknown parameters on the basis of some fitting criteria to obtain an approximate version of $f_{\bar R_s}(r)$. With the aid of the approximate $f_{\bar R_s}\left(r\right)$, expressions for the outage probability $p_{out}$ and the effective throughput $T$ can be obtained. 

\subsection{Distribution Approximation for $f_{\bar R_s}(r)$}
\begin{table}
	\centering
	\caption{Distribution approximation for $f_{\bar R_s}(r)$}
	\fbox{%
		\parbox{\textwidth}{%
			\begin{itemize}
				\item [1.]
				Select a suitable target approximate function model according to the distribution of $\bar R_s$. The model should fit the original distribution well and have as few parameters as possible.
				\item [2.]
				Calculate the moments of $\Re_b$ and $\Re_e$ separately based on distributions of $\gamma_b$ and $\gamma_e$, and then the moment of $\bar R_s$ can be easily deduced since $\Re_b$ and $\Re_e$ are independent of each other.
				\item [3.]
				Construct equations between the moments and distribution parameters of the objective approximate function model, and then using the moments as links to represent the unknown parameters of the objective function with the moments of $\bar R_s$.
			\end{itemize}
		}%
	}
	\vspace{-8mm}
	\label{Table1}
\end{table}
Using the distribution approximation technique to obtain an approximate version of an unknown distribution consists of two key steps: one is to determine the approximate distribution model, and the other is to select an estimation criterion for determining distribution parameters. Specifically, the approximate distribution model should meet several basic requirements, including being able to fit the original distribution well and involving less unknown parameters that are easy to be obtained. As for the estimation criteria of distribution parameters, various estimation criteria such as least squares, maximum likelihood and moments methods have been exploited for the unknown distribution parameters estimation in \cite{Tsai2004Estimation, Jim2014Estimation}
. In this paper, we choose the moments method for obtaining the approximate expression of $f_{\bar R_s}(r)$, whose analytical framework is shown in Table \ref{Table1}. Especially, we choose Gaussian distribution as the target approximate distribution model. Reasons for choosing Gaussian distribution are two folds.
On the one hand, only two parameters (mean and variance) are involved, which leads to a relative low complexity of determining the unknown parameters. More critically, as we will see, the relationship between the model parameters (mean and variance) and the moment characteristics of $\Re_b$ and $\Re_e$ is very simple, and the computational complexity for the corresponding parametric equations is quite low. On the other hand, the approximation performance is excellent in a large range of SNR, as will be shown by the simulation results. In a word, Gaussian approximate distribution model involves a satisfying trade-off between approximate accuracy and complexity\footnote{It should be particularly noted that other distribution models with better adaptability can also be used for approximation here, more accurate approximation may be obtained compared with Gaussian models. For example,  Weibull distribution or generalized normal distribution can be used to obtain similar or even better approximations, which can provide more universal distribution fittings. However, these distributions involve more parameters that require an equivalent number of moments. What's more, the equations established between the parameters and the moment features are very difficult to solve, if not impossible.}.


Following the analytical framework in Table \ref{Table1}, the mean and variance of $\bar R_s$ should be deduced with the PDFs 
$f_{\gamma_b}\left(x\right)=\frac{1}{\overline\gamma^K\Gamma\left(K\right)}x^{K-1}\mathrm{e}^{-\frac{x}{\overline\gamma}}$ and $f_{\gamma_e}\left(y\right)=\frac{1}{\overline\gamma}\mathrm{e}^{-\frac{y}{\overline\gamma}}$. Then the mean and the second-order origin moment of $\Re_b$ can be respectively represented as
\begin{align}
m_{\Re_b}&=\int_{0}^{\infty} \left(C_b\left(x\right)-P\sqrt{V_b\left(x\right)}\right)f_{\gamma_b}\left(x\right)\, dx=p_0-p_1P,
\label{equation9}\\
E\left\{\Re_b^2\right\}&=\int_{0}^{\infty} \left(C_b\left(x\right)-P\sqrt{V_b\left(x\right)}\right)^2f_{\gamma_b}\left(x\right)\, dx=p_2-p_3P+p_4P^2,
\label{equation10}
\end{align}
where $p_0\triangleq\int_{0}^{\infty} C_b\left(x\right)f_{\gamma_b}\left(x\right)\,dx$, $p_1\triangleq\int_{0}^{\infty}\sqrt{V_b\left(x\right)}f_{\gamma_b}\left(x\right)\,dx$, $p_2\triangleq\int_{0}^{\infty}C_b^2\left(x\right)f_{\gamma_b}\left(x\right)\,dx$,
$p_3\triangleq\int_{0}^{\infty}2C_b\left(x\right)\sqrt{V_b\left(x\right)}f_{\gamma_b}\left(x\right)\,dx$, and
$p_4\triangleq\int_{0}^{\infty}V_b\left(x\right)f_{\gamma_b}\left(x\right)\,dx$.
Then the variance of $\Re_b$ is
\begin{align}
\sigma_{\Re_b}^2=(p_4-p_1^2)P^2+\left(2p_0p_1-p_3\right)P+p_2-p_0^2.
\label{equation11}
\end{align}
All the parameters are  obviously positive and can be numerically calculated.
Similarly, the mean and variance of $\Re_e$ are represented by
\begin{align}
m_{\Re_e}=q_0+q_1Q,\qquad \sigma_{\Re_e}^2=(q_4-q_1^2)Q^2+\left(q_3-2q_0q_1\right)Q+q_2-q_0^2 \label{equation12}
\end{align}
where $q_0\triangleq\int_{0}^{\infty} C_e\left(y\right)f_{\gamma_e}\left(y\right)\,dy$, $q_1\triangleq\int_{0}^{\infty}\sqrt{V_e\left(y\right)}f_{\gamma_e}\left(y\right)\,dy$, $q_2\triangleq\int_{0}^{\infty}C_e^2\left(y\right)f_{\gamma_e}\left(y\right)\,dy$,
$q_3\triangleq\int_{0}^{\infty}2C_e\left(y\right)\sqrt{V_e\left(y\right)}f_{\gamma_e}\left(y\right)\,dy$, and
$q_4\triangleq\int_{0}^{\infty}V_e\left(y\right)f_{\gamma_e}\left(y\right)\,dy$.

For the reason that $\gamma_b$ and $\gamma_e$ are independent, we can obtain the approximate result of $\bar R_s$ with the following theorem.
\begin{theorem}
	The distribution of $\bar R_s$ can be approximated as normal distribution, and the CDF $\tilde F_{\bar R_s}(r)$ can be formulated as
	\begin{align}
	\tilde F_{\bar R_s}(r)=\Phi\left(\frac{r-m_{\bar R_s}}{\sigma_{\bar R_s}}\right),
	\label{equation14}
	\end{align}
	with $m_{\bar R_s}=m_{\Re_b}-m_{\Re_e}$ and $\sigma_{\bar R_s}^2 =\sigma_{\Re_b}^2+\sigma_{\Re_e}^2$,
where $\Phi(x)\triangleq \frac{1}{\sqrt{2\pi}}\int_{-\infty}^{x}\mathrm{e}^{-\frac{t^2}{2}}dt$.
\label{theorem1}
\end{theorem}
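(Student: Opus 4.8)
The plan is to treat Theorem~\ref{theorem1} as the output of the moments-method construction outlined in Table~\ref{Table1}, so that the argument has three ingredients: (i) fixing the Gaussian family as the approximating model, (ii) reducing the first two moments of $\bar R_s$ to the moments of $\Re_b$ and $\Re_e$ that were already computed in \eqref{equation11} and \eqref{equation12}, and (iii) matching those two moments to the two free parameters of the Gaussian model. Steps (ii) and (iii) are short algebra; the substance is the justification that the Gaussian family is adequate, since \eqref{equation14} is an approximation and not an identity.

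First I would invoke the decomposition $\bar R_s=\Re_b-\Re_e$ from \eqref{Rs}, together with the structural fact stressed just after \eqref{equationRe} that $\Re_b$ depends on the channels only through $\gamma_b$ (via $C_b$ and $V_b$) while $\Re_e$ depends only on $\gamma_e$, and that $\gamma_b\sim Gamma(K,\overline{\gamma})$ and $\gamma_e\sim Exp(1/\overline{\gamma})$ are independent. Linearity of expectation then gives $m_{\bar R_s}=E\{\Re_b\}-E\{\Re_e\}=m_{\Re_b}-m_{\Re_e}$, and independence annihilates the cross term in the variance, so $\mathrm{Var}(\bar R_s)=\sigma_{\Re_b}^2+\sigma_{\Re_e}^2$. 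Substituting the expressions already derived --- $m_{\Re_b}=p_0-p_1P$ and $\sigma_{\Re_b}^2$ from \eqref{equation11}, $m_{\Re_e}=q_0+q_1Q$ and $\sigma_{\Re_e}^2$ from \eqref{equation12} --- writes both target moments purely in terms of the $p_i$, $q_i$ and the constants $P$, $Q$. I would also record the regularity facts that make the construction well posed: $C_b$ and $V_b$ are continuous on $(0,\infty)$, $V_b\in(0,1)$ is bounded and $C_b$ grows only logarithmically, so all the moment integrals $p_i,q_i$ converge against the Gamma and exponential densities, and $\sigma_{\bar R_s}^2>0$; hence the right-hand side of \eqref{equation14} is a genuine CDF.

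Next I would carry out Step~3 of Table~\ref{Table1}: for a Gaussian model $\mathcal{N}(\mu,\sigma^2)$ the first raw moment is $\mu$ and the second central moment is $\sigma^2$, so the moment-matching equations are immediate to invert, $\mu=m_{\bar R_s}$ and $\sigma^2=\sigma_{\bar R_s}^2$, and the approximating CDF is $\tilde F_{\bar R_s}(r)=\Phi\big((r-m_{\bar R_s})/\sigma_{\bar R_s}\big)$, which is exactly \eqref{equation14}. This is the one place where choosing the Gaussian family pays off, as noted in the discussion around Table~\ref{Table1}: the parameter--moment map is the identity, so no nonlinear system has to be solved.

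The main obstacle is not the algebra but arguing that the Gaussian model is an appropriate target in the first place. I would make this heuristic rather than rigorous: $\gamma_b=\overline{\gamma}\|\bm h_b\|^2$ is a scaled sum of $K$ i.i.d.\ exponential terms, so for the antenna counts of interest a central-limit-type concentration makes the smooth transformations $\log_2(1+\gamma_b)$ and $\sqrt{V_b}$, hence $\Re_b$, approximately Gaussian; the eavesdropper term $\Re_e$ is a bounded, monotone transformation of a single exponential whose spread is comparatively small, and subtracting an (approximately independent) term of modest variance does not destroy approximate normality of $\bar R_s$. A fully quantitative error control --- say via Edgeworth or Berry--Esseen estimates on the transformed variables --- is out of scope, and instead the adequacy of the approximation is deferred to the numerical validation in Section~\ref{NumericalSimulation}, consistent with the footnote in the text acknowledging that other (more-parameter) models could fit marginally better at the cost of intractable parameter equations.
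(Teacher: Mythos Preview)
Your proposal is correct and follows exactly the same route as the paper: the paper's proof is a single sentence invoking the decomposition \eqref{Rs}--\eqref{equationRe} and the independence of $\gamma_b$ and $\gamma_e$, which is precisely your step (ii). Your additional material --- the convergence of the $p_i,q_i$ integrals, the explicit moment-matching inversion, and the heuristic CLT justification for the Gaussian choice --- goes well beyond what the paper actually proves, but none of it is wrong and it fills in context the paper leaves to Table~\ref{Table1} and the surrounding discussion.
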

\begin{proof}
	According to the definitions in (\ref{Rs})-(\ref{equationRe}) and the independence between $\gamma_b$ and $\gamma_e$, we immediately have the results.
\end{proof}

\begin{figure}[!t]
	\begin{center}
		\includegraphics[width=3 in]{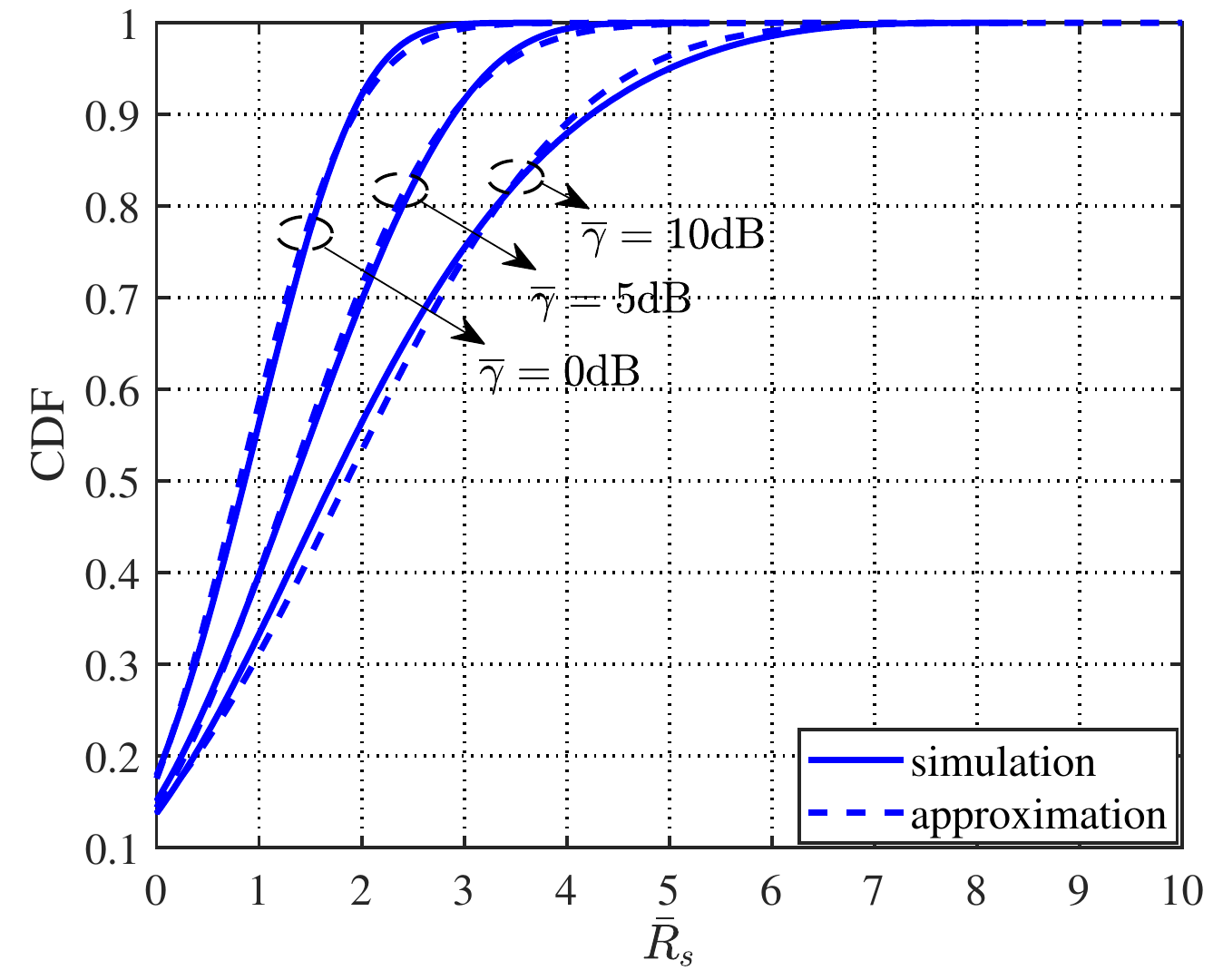}
	\end{center}
\vspace{-5mm}
	\caption{\small{CDFs of $\bar R_s$ and their approximations under different $\overline\gamma$ with $N=200$, $\bar\epsilon=\bar\delta=10^{-3}$, $K=4$.}}
	\label{fig2}
	\vspace{-10mm}
\end{figure}

To facilitate subsequent analysis, based on (\ref{equation9})-(\ref{equation12}), we reformulate $m_{\bar R_s}$ and $\sigma_{\bar R_s}^2$as
\begin{align}
m_{\bar R_s}=-\frac{\mu_1}{\sqrt{N}}+\mu_0,\qquad \sigma_{\bar R_s}^2=\frac{\mu_4}{N}+\frac{\mu_3}{\sqrt{N}}+\mu_2, \nonumber
\end{align}
where $\mu_0\triangleq p_0-q_0$, $\mu_1\triangleq \frac{p_1Q^{-1}(\bar \epsilon)+q_1Q^{-1}(\bar \delta)}{\ln2}$, $\mu_2\triangleq p_2-p_0^2+q_2-q_0^2$, $\mu_3\triangleq \frac{(2p_0p_1-p_3)Q^{-1}(\bar \epsilon)-(2q_0q_1-q_3)Q^{-1}(\bar \delta)}{\ln2}$, and $\mu_4\triangleq \frac{\left(p_4-p_1^2\right)\left(Q^{-1}(\bar \epsilon)\right)^2+\left(q_4-q_1^2\right)\left(Q^{-1}(\bar \delta)\right)^2}{(\ln2)^2}$.
The approximate performance is shown in Fig. \ref{fig2}. As the curves show, the CDF curves obtained by the approximate model in Theorem \ref{theorem1} fit the true CDF of $\bar R_s$ well in a large range of SNRs. Under the currently selected Gaussian approximation model, the approximate error tends to be larger as the average SNR $\overline\gamma$ increases.

\subsection{Effective Throughput Optimization }
With the approximate CDF of $\bar R_s$, the outage probability can be represented as
\begin{align}
p_{out}=\mathrm{\Phi}\left(\frac{R_0-m_{\bar R_s}}{\sigma_{\bar R_s}}\right).
\label{equation19}
\end{align}
Then the optimization problem in (\ref{equation22}) can be reformulated as
\begin{subequations}
	\begin{align}
	&\mathop {\mathrm {maximize}}_{N}~ T=\frac{B}{N}\mathrm{\Phi}\left(\frac{m_{\bar R_s}-\frac{B}{N}}{\sigma_{\bar R_s}}\right) \label{approxTa}\\
	&\mathrm {s.t.}~ p_{out}\leq\zeta, \label{approxTb}\\
    &\hphantom { \mathrm {s.t.}~}N\leq N_G, N\in \mathbb {N}_{+} \label{approxTc}
	\end{align}
	\label{approxT}
\end{subequations}
The non-trivial constraint (\ref{approxTb}) on $p_{out}$ is a complicated function about $N$. With the help of the following two lemmas, we will get the range of $N$ equivalent to  (\ref{approxTb}).
\begin{table}[]
\centering
\label{table2}
\caption{$N_{\Omega}$  equivalent to (\ref{g}) under different conditions}
\resizebox{0.5\textwidth}{!}{
\begin{tabular}{cccccc|c}
\hline
$c$     & $d$     & $g^{(1)}(t_0)$    & $g(t_{g})$           & $g(t_{l})$           & $e$                   & $\text{The range of}\ N_{\Omega}$   \\
\hline
$\geq0$ & $\geq0$ & \multirow{2}{*}{} &                       & \multirow{4}{*}{}     & $\geq0$               & \multirow{5}{*}{$(0,\infty)$}\\
$\geq0$ & $<0$    &                   & $\geq0$               &                       &                       &                        \\
$<0$    &         & $\geq0$           &                       &                       & $\geq0$               &                        \\
$<0$    & $\leq0$ &              & $\geq0$               &                       &                       &                        \\
$<0$    & $>0$    & $<0$              & $\geq0$               &                       & $\geq0$               &                        \\
\hline
$\geq0$ & $\geq0$ & \multirow{2}{*}{} &                       & \multirow{4}{*}{}     & $<0$                  & \multirow{6}{*}{$\left(0, \frac{1}{n_m^2}\right]$} \\
$\geq0$ & $<0$    &                   &                  &                       & $\leq0$               &                        \\
$<0$    &         & $\geq0$           &                       &                       & $<0$                  &                        \\
$<0$    & $\leq0$ &              &                 &                       & $\leq0$               &                        \\
$<0$    & $>0$    & $<0$              &                       & $\leq0$               &                       &                        \\
$<0$    & $>0$    & $<0$              & $\geq0$               &                  & $<0$                  &                        \\
\hline
$\geq0$ & $<0$    &                   & $<0$                  &                       & $>0$                  & \multirow{3}{*}{$\left(0, \frac{1}{n_m^2}\right]\cup\left[\frac{1}{n_{m-1}^2}, \infty\right)$} \\
$<0$    & $\leq0$ &             & $<0$                  &                       & $>0$                  &                        \\
$<0$    & $>0$    & $<0$              & $<0$                  &                  & $\geq0$               &                        \\
\hline
$<0$    & $>0$    & $<0$              & $<0$                  & $>0$                  & $<0$                  & $\left(0,\frac{1}{n_m^2}\right]\cup\left[\frac{1}{n_{m-1}^2}, \frac{1}{n_{m-2}^2}\right]$
\\
\hline
\end{tabular}}
\vspace{-10mm}
\end{table}
\begin{lemma}
\label{lemma1}
Let $n=\frac{1}{\sqrt{N}}$, the constraint (\ref{approxTb}) is equivalent to the following two inequations
\begin{subequations}
\begin{align}
&g(n)\triangleq an^4+bn^3+cn^2+dn+e\geq0, \label{ga}\\
&0<n\leq\frac{\sqrt{\mu_1^2+4B\mu_0}-\mu_1}{2B}\ \text{with } \mu_0>0, \label{gb}
\end{align}
\label{g}
\end{subequations}
where $a\triangleq B^2>0$, $b\triangleq 2B\mu_1>0$, $c\triangleq \mu_1^2-2B\mu_0 -\left(\Phi^{-1}(\zeta)\right)^2\mu_4$, $d\triangleq -2\mu_0\mu_1-\left(\Phi^{-1}(\zeta)\right)^2\mu_3$, and $e\triangleq \mu_0^2-\left(\Phi^{-1}(\zeta)\right)^2\mu_2$.
Furthermore, let $g^{(1)}(x)$ and $g^{(2)}(x)$ denote the first and second derivatives of $g(x)$, and $t_0$ is the unique positive root of $g^{(2)}(x)=0$, then the changing trend of $g(n)$ with $n>0$ under different conditions have three cases:

Case A: if $c,d\geq0$ or $c<0$, $g^{(1)}(t_0)\geq0$, $g(n)$ monotonically increases with $n>0$.

Case B: if $c\geq0$, $d<0$ or $c<0, d\leq0$, $g(n)$ decreases first and then increases with $n$.

Case C: if $c,g^{(1)}(t_0)<0, d>0$, $g(n)$ increases first, decreases and then increases with $n$.

\end{lemma}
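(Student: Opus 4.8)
The plan is to prove the two halves of the lemma separately: first, the equivalence of the outage constraint (\ref{approxTb}) with the system (\ref{ga})--(\ref{gb}); second, the description of the shape of $g$ on $(0,\infty)$ through its first and second derivatives.

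For the equivalence, I would start from (\ref{equation19}), substitute $R_0=B/N=Bn^2$ together with the reparameterizations $m_{\bar R_s}=-\mu_1 n+\mu_0$ and $\sigma_{\bar R_s}^2=\mu_4 n^2+\mu_3 n+\mu_2$, and use the strict monotonicity of $\Phi$ to rewrite $p_{out}\le\zeta$ as $Bn^2+\mu_1 n-\mu_0\le \Phi^{-1}(\zeta)\,\sigma_{\bar R_s}$. The key observation is that $\zeta\le 0.5$ forces $\Phi^{-1}(\zeta)\le 0$, so the right-hand side is non-positive, and a necessary condition is $Bn^2+\mu_1 n-\mu_0\le 0$. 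Since $\mu_1>0$ (a positive combination of $p_1,q_1>0$ with the non-negative quantities $Q^{-1}(\bar\epsilon),Q^{-1}(\bar\delta)$), this quadratic in $n$ can be non-positive for some $n>0$ only when $\mu_0>0$, and then exactly on $(0,\frac{\sqrt{\mu_1^2+4B\mu_0}-\mu_1}{2B}]$, which is (\ref{gb}). Under (\ref{gb}) both sides are non-positive, so squaring reverses the inequality: it becomes $(Bn^2+\mu_1 n-\mu_0)^2\ge(\Phi^{-1}(\zeta))^2(\mu_4 n^2+\mu_3 n+\mu_2)$, and expanding the square and collecting powers of $n$ produces precisely $g(n)\ge 0$ with the stated $a,b,c,d,e$. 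The converse direction is immediate: $g(n)\ge 0$ with (\ref{gb}) gives $|Bn^2+\mu_1 n-\mu_0|\ge|\Phi^{-1}(\zeta)|\,\sigma_{\bar R_s}$, and since the left factor is non-positive this is the original inequality. This settles (\ref{g}).

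For the shape of $g$, I would work with $g^{(1)}(n)=4an^3+3bn^2+2cn+d$ and $g^{(2)}(n)=12an^2+6bn+2c$, noting $g^{(1)}(0)=d$. The sign of $c$ governs whether $g^{(2)}$ vanishes on $(0,\infty)$: because $a,b>0$, the two roots of $g^{(2)}$ have negative sum $-b/(2a)$ and product $c/(6a)$, so if $c\ge 0$ both are non-positive (or complex) and $g^{(2)}>0$ on $(0,\infty)$, making $g^{(1)}$ strictly increasing there; if $c<0$ the product is negative, so $g^{(2)}$ has a unique positive root $t_0$, negative on $(0,t_0)$ and positive afterwards, hence $g^{(1)}$ is unimodal (decreasing then increasing) with minimum value $g^{(1)}(t_0)$. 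Then I would simply count sign changes of $g^{(1)}$ from the starting value $g^{(1)}(0)=d$: if $g^{(1)}>0$ throughout $(0,\infty)$ — i.e. $c\ge0,d\ge0$, or $c<0$ with minimum $g^{(1)}(t_0)\ge0$ — then $g$ is monotonically increasing (Case A); if $g^{(1)}$ changes sign once, from negative to positive — i.e. $g^{(1)}(0)=d\le0$ while $g^{(1)}$ eventually becomes positive, which is $c\ge0,d<0$ or $c<0,d\le0$ — then $g$ decreases then increases (Case B); and if $g^{(1)}$ changes sign twice, positive--negative--positive — which forces $d>0$, a positive turning point (so $c<0$), and $g^{(1)}(t_0)<0$ — then $g$ increases, decreases, increases (Case C). Checking that these three sign patterns for $(c,d,g^{(1)}(t_0))$ partition all possibilities, with the borderline values $d=0$ and $g^{(1)}(t_0)=0$ consistently absorbed into Case A/B, completes the argument.

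I expect the main obstacle to be bookkeeping rather than any genuine difficulty: in Part~1, keeping the inequality direction correct through the squaring step, which is legitimate only because $\Phi^{-1}(\zeta)\le0$ pins the left-hand side to be non-positive; and in Part~2, verifying exhaustiveness and internal consistency of the case conditions (for instance, that $c<0$ with $d\le0$ indeed forces $g^{(1)}(t_0)<0$, so there is no conflict between the Case~A and Case~B hypotheses). The positivity facts $a,b,\mu_1>0$ and the derived $\mu_0>0$ carry essentially all the weight; the remainder is elementary polynomial calculus.
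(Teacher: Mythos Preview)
Your proposal is correct and follows essentially the same approach as the paper's proof: both reduce the constraint via monotonicity of $\Phi$ and the sign of $\Phi^{-1}(\zeta)$ to the pair (\ref{ga})--(\ref{gb}), and both classify the shape of $g$ by tracking sign changes of $g^{(1)}$ after determining whether $g^{(2)}$ has a positive root. The only cosmetic difference is that the paper argues $g^{(2)}$ is increasing on $(0,\infty)$ via $g^{(3)}(n)=24an+6b>0$ rather than via Vieta on the roots of $g^{(2)}$, and it does not spell out the converse direction in Part~1 as explicitly as you do.
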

\begin{proof}
The proof is provided in Appendix \ref{appendixl0}.
\end{proof}

\begin{lemma}
\label{lemma2}
Inequality (\ref{ga}) is equivalent to a constraint on the range of $N$ satisfying $N_{\Omega}$ represented in Table II.
where $n_i, i=1,2,\cdots,m$ stands for all the positive and distinguish real roots of $g(x)=0$, which are put in an increasing order. $t_g$ and $t_l$ are the largest and second largest real roots of $g^{(1)}(x)=0$. Especially, $t_g$ can be represented as
\begin{equation}
t_g=\left\{
             \begin{array}{lr}
             t_{11}, & \Delta>0 \\
             \max\left\{t_{11}, t_{12} ,t_{13}\right\}, & \Delta\leq0
             \end{array}
\right.
\label{t0}
\end{equation}
\begin{equation}
\text{with }\left\{
\begin{array}{lr
}
t_{11}=-\frac{b}{3a}+\sqrt[3]{-\frac{v}{2}+\sqrt{\Delta}}+\sqrt[3]{-\frac{v}{2}-\sqrt{\Delta}}\\ t_{12}=-\frac{b}{3a}+w\sqrt[3]{-\frac{v}{2}+\sqrt{\Delta}}+w^2\sqrt[3]{-\frac{v}{2}-\sqrt{\Delta}}\\ t_{13}=-\frac{b}{3a}+w^2\sqrt[3]{-\frac{v}{2}+\sqrt{\Delta}}+w\sqrt[3]{-\frac{v}{2}-\sqrt{\Delta}}
\end{array}
\right.
\label{t0i}
\end{equation}
where $\Delta=\frac{v^2}{4}+\frac{u^3}{27}$, $w\triangleq\frac{-1+\sqrt{3}j}{2}$, $u\triangleq\frac{8ac-3b^2}{16a^2}$, and  $v\triangleq\frac{8a^2d-4abc+b^3}{32a^3}$.
\end{lemma}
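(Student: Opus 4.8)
The plan is to combine the shape classification of Lemma \ref{lemma1} with the order-reversing change of variables $N=1/n^{2}$. By Lemma \ref{lemma1}, on $(0,\infty)$ the quartic $g$ follows exactly one of three monotonicity profiles: Case A, strictly increasing; Case B, strictly decreasing on $(0,t_{g})$ then strictly increasing, so that $e=g(0^{+})>g(t_{g})=\inf_{n>0}g(n)$; and Case C, strictly increasing on $(0,t_{l})$, then decreasing on $(t_{l},t_{g})$, then increasing, so that $e<g(t_{l})$ and $g(t_{g})<g(t_{l})$. Because $g(n)\to+\infty$ as $n\to\infty$ and $g$ is strictly monotone on each of these subintervals, the intermediate value theorem yields at most one zero per subinterval; hence the number and ordering of the positive roots $n_{1}<\dots<n_{m}$ of $g$, and the sign of $g$ between consecutive roots, are completely pinned down by the signs of $e$, $g(t_{g})$ and, in Case C, $g(t_{l})$. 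The solution set $S:=\{n>0:g(n)\ge 0\}$ of (\ref{ga}) is therefore a finite union of intervals whose endpoints lie in $\{0,n_{1},\dots,n_{m},\infty\}$.

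First I would enumerate $S$ case by case. Case A: $S=(0,\infty)$ if $e\ge 0$, and $S=[n_{m},\infty)$ with $m=1$ if $e<0$. Case B: if $g(t_{g})\ge 0$ then $S=(0,\infty)$; if $g(t_{g})<0$ and $e\le 0$ then $m=1$ and $S=[n_{m},\infty)$; if $g(t_{g})<0$ and $e>0$ then $m=2$ and $S=(0,n_{m-1}]\cup[n_{m},\infty)$. Case C: if $g(t_{g})\ge 0$ then $S=(0,\infty)$ for $e\ge 0$ and $S=[n_{m},\infty)$, $m=1$, for $e<0$; if $g(t_{g})<0$ and $g(t_{l})\le 0$ then $S=[n_{m},\infty)$, $m=1$; if $g(t_{g})<0<g(t_{l})$ then $S=(0,n_{m-1}]\cup[n_{m},\infty)$, $m=2$, for $e\ge 0$, and $S=[n_{m-2},n_{m-1}]\cup[n_{m},\infty)$, $m=3$, for $e<0$. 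Each line here corresponds to one block of rows of Table II once the sign conditions of Lemma \ref{lemma1} that define Cases A, B, C are written out explicitly (for instance $c<0$, $d>0$, $g^{(1)}(t_{0})<0$ is exactly Case C, while $c<0$, $g^{(1)}(t_{0})\ge 0$ forces Case A).

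Next I would transport $S$ through the map $\phi(n)=1/n^{2}$, which is a strictly decreasing bijection of $(0,\infty)$ onto itself and hence sends $(0,a]\mapsto[1/a^{2},\infty)$, $[a,\infty)\mapsto(0,1/a^{2}]$ and $[a,b]\mapsto[1/b^{2},1/a^{2}]$. Applying this piecewise to each $S$ above produces precisely the $N_{\Omega}$ column of Table II; for example the richest configuration, Case C with $g(t_{g})<0<g(t_{l})$ and $e<0$, gives $S=[n_{1},n_{2}]\cup[n_{3},\infty)$ and hence $N_{\Omega}=(0,1/n_{m}^{2}]\cup[1/n_{m-1}^{2},1/n_{m-2}^{2}]$, the last row of the table.

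Finally I would verify the closed form (\ref{t0})--(\ref{t0i}) for the turning points: $t_{l}<t_{g}$ are the largest two real roots of the cubic $g^{(1)}(x)=4ax^{3}+3bx^{2}+2cx+d=0$, and the standard linear shift removing the quadratic term (the constant appearing in (\ref{t0i})) reduces this to the depressed cubic $y^{3}+uy+v=0$ with $u=\frac{8ac-3b^{2}}{16a^{2}}$ and $v=\frac{8a^{2}d-4abc+b^{3}}{32a^{3}}$; Cardano's formula then applies, with discriminant $\Delta=\frac{v^{2}}{4}+\frac{u^{3}}{27}$ distinguishing the single-real-root case ($\Delta>0$, root $t_{11}$) from the three-real-root case ($\Delta\le 0$, roots $t_{11},t_{12},t_{13}$ obtained by inserting the cube roots of unity $1,w,w^{2}$ in front of the two cube-root terms), so that $t_{g}$ is the largest real root and $t_{l}$ the second largest. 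I expect the main obstacle to be the combinatorial bookkeeping: one has to check that Lemma \ref{lemma1}'s three shapes, refined by the signs of $e$, $g(t_{g})$ and $g(t_{l})$, exhaust every admissible sign pattern of $(c,d,e,g^{(1)}(t_{0}),g(t_{g}),g(t_{l}))$ and that each pattern maps to exactly one row of Table II, and in addition to treat separately the non-generic configurations that the table suppresses, namely $e=0$, $g(t_{g})=0$ or $g(t_{l})=0$, which create repeated roots and may add an isolated admissible value of $N$.
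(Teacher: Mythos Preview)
Your proposal is correct and follows essentially the same route as the paper's proof: both use the three-case shape classification of Lemma~\ref{lemma1}, refine each case by the signs of $e$, $g(t_{g})$ and (in Case~C) $g(t_{l})$ to locate the positive zeros of $g$, and then translate the resulting $n$-intervals to $N$-intervals via $N=1/n^{2}$. Your write-up is in fact a bit more explicit than the paper's in separating the two steps (first describe $S=\{n>0:g(n)\ge 0\}$, then push forward through the order-reversing bijection), and you correctly flag the boundary configurations $e=0$, $g(t_{g})=0$, $g(t_{l})=0$ that the table absorbs into the generic rows; one small caution is that when you carry out the depressed-cubic substitution for $g^{(1)}$ you should find the shift $x=y-\tfrac{b}{4a}$ (consistent with the stated $u,v$), so do not simply copy the constant $-\tfrac{b}{3a}$ from (\ref{t0i}) without checking it.
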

\begin{proof}
The proof is provided in Appendix \ref{appendixl1}.
\end{proof}

With the above two lemmas, we now have a more explicit constraint on the range of $N$.
Together with  (\ref{approxTc}), (\ref{approxT}) becomes a one-dimensional optimization problem with explicit constraints. The objective function cannot be further deduced and analyzed for the reason that the parameters in $ m_{\bar R_s}$ and $\sigma_{\bar R_s}$ need to be calculated by numerical integration. We have to conduct a one-dimensional search to obtain the optimal blocklength that maximizes $T$.

\begin{theorem}
\label{theorem0}
The optimal blocklength $N^\#$ for problem (\ref{approxT}) can be obtained by one-dimensional positive integer search on the range of $N\in N_{\omega}\cup N_{\Omega}$, with $N_{\omega}=\left[N_L,N_G\right]$ if $\mu_0>0$, where $N_L\triangleq\frac{4B^2}{\left(\sqrt{\mu_1^2+4B\mu_0}-\mu_1\right)^2}$.
\end{theorem}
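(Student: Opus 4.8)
The plan is to combine the two preceding lemmas into a single explicit description of the feasible set for $N$, and then argue that maximizing $T$ over that set reduces to a finite integer search. First I would recall from \eqref{approxTb} that the outage constraint $p_{out}\le\zeta$ is, by Lemma \ref{lemma1}, exactly equivalent to the pair of conditions \eqref{ga}--\eqref{gb} after the substitution $n=1/\sqrt N$. The second condition \eqref{gb}, when $\mu_0>0$, states $0<n\le \frac{\sqrt{\mu_1^2+4B\mu_0}-\mu_1}{2B}$; inverting $n=1/\sqrt N$ (a decreasing bijection from $n>0$ to $N>0$) turns this into a lower bound $N\ge N_L$ with $N_L=\frac{4B^2}{(\sqrt{\mu_1^2+4B\mu_0}-\mu_1)^2}$, which is precisely the definition of $N_\omega=[N_L,N_G]$ once we intersect with the blocklength cap \eqref{approxTc}. (When $\mu_0\le 0$ the term \eqref{gb} imposes no lower bound and $N_\omega$ degenerates accordingly; I would note this case briefly.)

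Next I would invoke Lemma \ref{lemma2}, which translates the polynomial inequality \eqref{ga} on $n$ into the set $N_\Omega$ of admissible $N$ tabulated in Table II, expressed through the positive roots $n_i$ of $g$ and the critical points $t_g,t_l$ of $g^{(1)}$. Intersecting the two equivalent conditions, the feasible region for the optimization \eqref{approxT} is exactly $\{N\in\mathbb N_+ : N\le N_G\}\cap [N_L,\infty)\cap N_\Omega$, which is contained in $N_\omega\cup N_\Omega$ (the union is a safe over-set that is trivially searchable; intersecting with the integrality and $N\le N_G$ constraints recovers the true feasible set). Since $N_G<\infty$ and $N\in\mathbb N_+$, this set is finite, so a one-dimensional positive-integer search over $N\in N_\omega\cup N_\Omega$ necessarily encounters every feasible point and therefore returns the maximizer $N^\#$ of $T$.

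The only remaining point is to justify that an exhaustive search is actually \emph{necessary} rather than merely sufficient — i.e.\ that we cannot replace it by a closed-form solution. Here I would simply observe, as the text already anticipates, that $m_{\bar R_s}$ and $\sigma_{\bar R_s}$ depend on the moment parameters $p_0,\dots,p_4,q_0,\dots,q_4$, which are defined by integrals \eqref{equation9}--\eqref{equation12} that have no elementary closed form and must be computed numerically; consequently the objective $T=\frac{B}{N}\Phi\!\big(\frac{m_{\bar R_s}-B/N}{\sigma_{\bar R_s}}\big)$ in \eqref{approxTa} cannot be differentiated in closed form, and stationarity conditions are unavailable. Hence enumeration over the (finite) feasible integer set is the correct procedure, and the theorem follows.

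The main obstacle I anticipate is bookkeeping rather than depth: one must be careful that the substitution $n=1/\sqrt N$ is monotone decreasing, so that ``$\le$'' in $n$ flips to ``$\ge$'' in $N$ and the root intervals of Table II invert correctly into the stated $N$-intervals (the $1/n_i^2$ expressions), and that the case split on $\mu_0$ and on the signs of $c,d,e$ in Lemmas \ref{lemma1}--\ref{lemma2} is carried through consistently so that $N_\omega\cup N_\Omega$ genuinely covers every feasible $N$. Once that translation is handled cleanly, the finiteness of the search set and hence the conclusion are immediate.
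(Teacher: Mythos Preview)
Your proposal is correct and follows essentially the same approach as the paper: the paper's own proof is a one-line appeal to Lemmas \ref{lemma1} and \ref{lemma2} (``replace \eqref{approxTb} with the conclusions of Lemmas \ref{lemma1} and \ref{lemma2}''), and you have simply written out that replacement in detail, including the inversion $n=1/\sqrt N\mapsto N\ge N_L$ and the observation that the stated search region $N_\omega\cup N_\Omega$ is an over-set of the true feasible intersection. Your added remarks on why closed-form optimization is unavailable and on the monotonicity bookkeeping are sound elaborations, not departures from the paper's argument.
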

\begin{proof}
The proof is finished by replacing (\ref{approxTb}) with conclusions in Lemma \ref{lemma1} and \ref{lemma2}.
\end{proof}

Considering that the blocklength for short-packet transmission usually involves several hundred bits, it will not bring in  too much computational burden. Although the above proposed approach based on distribution approximation cannot provide closed-form expressions for the optimal blocklength $N^\#$ and the optimal effective throughput $T^\#$, it provides a general analytical framework for determining $N^\#$ and analyzing the system performance under any system conditions.

\section{Analysis and Optimization Framework in the High-SNR Domain}
\label{IV}
In Section \ref{III}, an general solution for the optimal blocklength and effective throughput of the system is given. However,
we cannot get closed-form expressions for the optimal blocklength and effective throughput, which makes it challenging to get more insights into the impact of system parameters on blocklength selection and system performance.

Fortunately, in the high-SNR domain, the form of $\bar R_s$ in (\ref{equation2}) will be greatly simplified, making it possible to obtain the analytical expression of $p_{out}$.
In this section, we will deduce the CDF $F_{\bar R_s}(r)$ with the assumption of high SNR. Further, closed-form expressions for outage probability $p_{out}$ and the effective throughput $T$ are derived. It can be proved that $T$ is a quasi-concave function of $N$, so that the optimal $N$ which can maximize $T$ will be found by utilizing a bisection search.

\subsection{High-SNR Approximation}
As discussed above, the derivation of the distribution of $\bar R_s$ is challenging since the form of $\bar R_s$ is complicated. Fortunately, the challenge will be tackled in high-SNR domain. It is reasonable to adopt the high-SNR assumption for the reason that complicated signal processing cannot be performed on resource-limited IoT nodes, so sufficient SNR should be guaranteed to ensure the quality of the received signal so that the information can be decoded correctly at these nodes.
\begin{figure}[!t]
\begin{center}
\includegraphics[width=3 in]{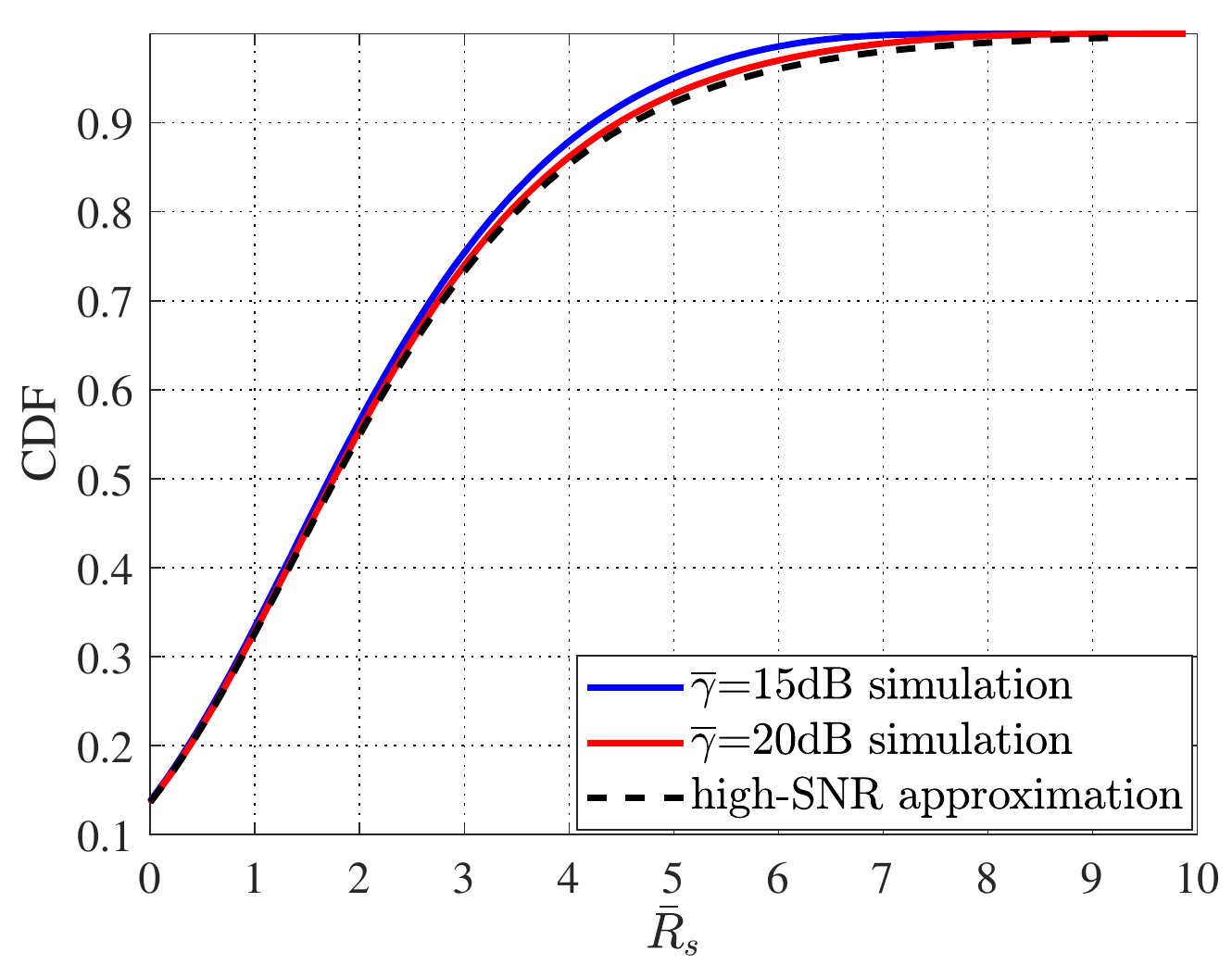}
\end{center}
\vspace{-8mm}
\caption{\small{CDFs of $\bar R_s$ and their high SNR approximations under different $\overline\gamma$ with $N=200$, $\bar\epsilon=\bar\delta=10^{-3}$, $K=4$.}}
\label{fig3}
\vspace{-10mm}
\end{figure}
In the high-SNR domain, $\log_2(1+x)\approx \log_2 x$, whereas $V_i=1-\left(1+\gamma_i\right)^{-2}\approx 1, i\in{b, e}$,
then the achievable secrecy rate in (\ref{equation2}) for the high SNR regime can be respectively approximated as
\begin{align}
\tilde R_s=\log_2\gamma_b-\log_2\gamma_e-P-Q=\tilde C_s-\upsilon, \label{equationR}
\end{align}
where $\tilde C_s \triangleq \log_2\gamma_b-\log_2\gamma_e$, $\upsilon\triangleq \frac{t}{\sqrt{N}}$ with $t\triangleq\frac{Q^{-1}\left(\bar \epsilon\right)+Q^{-1}\left(\bar \delta\right)}{\ln2}\geq0$ (as $\bar\epsilon$ and $\bar\delta$ are no more than 0.5). Obviously, in the high-SNR domain, if blocklength $N$, reliable and secure constraints $\bar \epsilon$ and $\bar \delta$ are given, $\bar R_s$ is equivalent to adding a constant term to  $\tilde C_s$. Recalling the PDFs of $\gamma_b$ and $\gamma_e$, the CDF of $\tilde R_s$ in the high SNR domain will be deduced as Lemma \ref{lemma3} shows.
\begin{lemma}
\label{lemma3}
For the high-SNR regime, the CDF of $\tilde R_s$ can be represented as
\begin{align}
F_{\tilde R_s}\left(r\right)=\left(\frac{\mathrm{2}^{r+\upsilon}}{\mathrm{2}^{r+\upsilon}+1}\right)^K.
\label{deducedpout}
\end{align}
\label{lemma}
\end{lemma}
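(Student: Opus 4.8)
The plan is to compute the CDF directly from its definition, exploiting the fact that $\tilde R_s$ in (\ref{equationR}) is just $\log_2(\gamma_b/\gamma_e)$ shifted by the deterministic constant $\upsilon$, together with the independence of $\gamma_b$ and $\gamma_e$. First I would rewrite
\[
F_{\tilde R_s}(r)=\Pr\!\left(\tilde R_s\le r\right)=\Pr\!\left(\log_2\gamma_b-\log_2\gamma_e\le r+\upsilon\right)=\Pr\!\left(\gamma_b\le 2^{\,r+\upsilon}\,\gamma_e\right),
\]
which is valid for every real $r$ because $2^{\,r+\upsilon}>0$ and the ratio $\gamma_b/\gamma_e$ has support $(0,\infty)$. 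A useful preliminary observation: both $\gamma_b\sim Gamma(K,\overline{\gamma})$ and $\gamma_e\sim Exp(1/\overline{\gamma})$ carry the common scale $\overline{\gamma}$, so it cancels in the ratio and the final answer must be independent of $\overline{\gamma}$, which serves as a sanity check on the computation.

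Next I would condition on $\gamma_b=x$ and integrate out $\gamma_e$. The key choice is to condition on $\gamma_b$ (not on $\gamma_e$): since $\gamma_e$ is exponential it has the elementary survival function $\Pr(\gamma_e\ge \tau)=\mathrm{e}^{-\tau/\overline{\gamma}}$, so the inner probability is a single exponential rather than an incomplete-Gamma expression. This gives
\[
F_{\tilde R_s}(r)=\int_0^\infty \Pr\!\left(\gamma_e\ge \frac{x}{2^{\,r+\upsilon}}\right) f_{\gamma_b}(x)\,dx
=\frac{1}{\overline{\gamma}^{\,K}\Gamma(K)}\int_0^\infty x^{K-1}\,\mathrm{e}^{-\beta x}\,dx,
\qquad \beta\triangleq \frac{1}{\overline{\gamma}}\bigl(1+2^{-(r+\upsilon)}\bigr).
\]
Recognising the standard Gamma integral $\int_0^\infty x^{K-1}\mathrm{e}^{-\beta x}\,dx=\Gamma(K)/\beta^{K}$ collapses the right-hand side to $(\overline{\gamma}\beta)^{-K}$, and substituting $\overline{\gamma}\beta=1+2^{-(r+\upsilon)}=\dfrac{2^{\,r+\upsilon}+1}{2^{\,r+\upsilon}}$ yields
\[
F_{\tilde R_s}(r)=\left(\frac{2^{\,r+\upsilon}}{2^{\,r+\upsilon}+1}\right)^{K},
\]
which is exactly (\ref{deducedpout}).

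There is no genuinely hard step here; the only things to get right are the bookkeeping of the scale $\overline{\gamma}$ and the decision to condition on $\gamma_b$ so that the exponential tail of $\gamma_e$ does the work — conditioning the other way leaves a needlessly messy incomplete-Gamma integral. I would also note that the result is a logistic-type law raised to the power $K$: for $K=1$ it reduces to $2^{\,r+\upsilon}/(2^{\,r+\upsilon}+1)$, the CDF of a shifted logistic random variable, consistent with $\log_2(\gamma_b/\gamma_e)$ being a difference of log-exponential variates, which provides an additional structural check.
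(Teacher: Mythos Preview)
Your proof is correct and, amusingly, takes the route you explicitly flag as the one to avoid in the paper's own derivation. The paper (Appendix~\ref{appendix0}) conditions on $\gamma_e$ rather than on $\gamma_b$: it writes $p_c^{\tilde C_s}=\Pr(\gamma_b>2^\tau\gamma_e)$ as a double integral with the Gamma density on the inside, evaluates the inner upper-incomplete-Gamma integral as the finite sum $\sum_{m=0}^{K-1}\frac{(2^\tau y)^m}{m!\,\overline\gamma^{m+1}}e^{-2^\tau y/\overline\gamma}$, integrates each term of that sum against the exponential density of $\gamma_e$, and finally collapses the resulting geometric sum $\sum_{m=0}^{K-1}\frac{2^{\tau m}}{(2^\tau+1)^{m+1}}$ to $1-\bigl(\tfrac{2^\tau}{2^\tau+1}\bigr)^K$. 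Your choice to condition on $\gamma_b$ exploits the closed-form exponential tail of $\gamma_e$ and reduces everything to a single standard Gamma integral, bypassing both the incomplete-Gamma expansion and the subsequent summation. The two arguments are equivalent in content but yours is shorter and more transparent; the paper's version has the minor virtue of making the $K$-dependence emerge term by term, but otherwise your approach is the cleaner one.
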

\vspace{-10mm}
\begin{proof}
The proof is provided in Appendix \ref{appendix0}.
\end{proof}

Fig. \ref{fig3} depicts the accurate $F_{\bar R_s}(r)$ under different 
$\overline\gamma$, as well as the high-SNR approximation of $F_{\tilde R_s}(r)$. It can be observed that the approximate CDF curve is below the accurate $F_{\bar R_s}(r)$. Meanwhile, the approximate accuracy gradually improves as $\overline\gamma$ increases. When $\overline\gamma=20\text{dB}$, the approximate curve is almost overlap with the accurate one.

\subsection{Unconstrained Effective Throughput Optimization}
\label{Unconstrained Effective Throughput}
Based on (\ref{deducedpout}), the outage probability and the effective throughput can be represented as
\begin{align}
p_{out}&=F_{\tilde R_s}\left(\frac{B}{N}\right)=\left(\frac{\mathrm{2}^{\frac{B}{N}+\frac{t}{\sqrt{N}}}}{\mathrm{2}^{\frac{B}{N}+\frac{t}{\sqrt{N}}}+1}\right)^K=H(N)^K,
\label{pout}\\
T&=\frac{B}{N}\left[1-H(N)^K\right] \label{equation20}
\end{align}
\vspace{-2mm}
where $h(N)\triangleq \frac{B}{N}+\frac{t}{\sqrt{N}}$ and $H(N)\triangleq\frac{2^{h(N)}}{2^{h(N)}+1}$. Relax the integer $N$ as a positive real number,
then the first derivative of $p_{out}$ to $N$ is
\begin{align}
\frac{dp_{out}}{dN}=-\frac{\lambda(N)}{N^2\left(1+2^{h(N)}\right)}H(N)^K\leq0, \nonumber
\end{align}
with $\lambda(N)\triangleq\ln2 \left(B+\frac{t}{2}\sqrt{N}\right)K$, which implies that  $p_{out}$ is a monotonically decreasing function about $N$. In addition, it can be proved easily that $p_{out}$  monotonically increases with
$B$ while decreases with $\bar \epsilon$ and $\bar \delta$.
Then  the following theorem proves that $T$ is a quasi-concave function about the relaxed $N$.

\begin{theorem}
The effective throughput $T$ in (\ref{equation20}) is a quasi-concave function of the relaxed continuous $N$. The optimal blocklength  that yields the largest effective throughput without the constraint $p_{out}\leq\zeta$ can be chosen in $\{\lceil {N^*} \rceil ,\lfloor {N^*} \rfloor \}$, where ${N^*}>0$ is the unique root of
\begin{align}
\Xi\left(N\right)\triangleq\left(\frac{\lambda(N)}{N\left(2^{h(N)}+1\right)}+1\right)\left(H(N)\right)^K-1=0.
\label{equation24}
\end{align}
\label{theorem2}
\end{theorem}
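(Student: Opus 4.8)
The goal is to show that $T(N)=\frac{B}{N}\bigl(1-H(N)^K\bigr)$ is quasi-concave on $N>0$ and that its maximizer satisfies $\Xi(N)=0$. My plan is to work with the sign of $T'(N)$ rather than trying to establish concavity directly, since $T$ is a product of a decreasing convex factor $B/N$ and an increasing factor $1-H(N)^K$, and such products are typically only quasi-concave, not concave. First I would compute $T'(N)$ explicitly. Writing $h(N)=\frac{B}{N}+\frac{t}{\sqrt N}$, we have $h'(N)=-\frac{1}{N^2}\bigl(B+\frac{t}{2}\sqrt N\bigr)$, and $\frac{d}{dN}H(N)^K = K H(N)^{K}\,(1-H(N))\,\ln 2\,h'(N)$ after using $H'=H(1-H)\ln 2\,h'$ and $1-H=\frac{1}{1+2^{h}}$. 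Collecting terms, $T'(N)$ factors as $\frac{B}{N^2}\bigl(H(N)^K-1\bigr) - \frac{B}{N}\cdot K H(N)^K (1-H(N))\ln2\, h'(N)$, and substituting $h'(N)$ and the definition $\lambda(N)=\ln 2\,(B+\frac{t}{2}\sqrt N)K$ lets me pull out a common positive factor, leaving $T'(N)$ proportional (with a strictly positive coefficient, e.g. $\frac{B}{N^2}$) to exactly $\Xi(N)=\bigl(\frac{\lambda(N)}{N(2^{h(N)}+1)}+1\bigr)H(N)^K-1$. So the stationary points of $T$ are precisely the zeros of $\Xi$.

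Next I would establish that $\Xi$ has a unique positive root and that $T'$ changes sign from $+$ to $-$ there — this is what gives both quasi-concavity and the characterization of the optimum. The strategy is to show $\Xi$ is strictly decreasing in $N$ on $(0,\infty)$, together with the boundary behaviour $\Xi(N)\to+\infty$ as $N\to 0^+$ and $\Xi(N)\to -1<0$ as $N\to\infty$. For the limits: as $N\to 0^+$, $h(N)\to\infty$ so $H(N)\to 1$ while $\lambda(N)/(N(2^{h(N)}+1))\to\infty$ (the $1/N$ blows up faster than $2^{h(N)}$ decays near $0$? — this needs care, see below), forcing $\Xi\to+\infty$; as $N\to\infty$, $h(N)\to 0$, $H(N)\to \tfrac12$, $\lambda(N)/(N(2^{h}+1))\sim \frac{\ln2\,tK}{2N^{1/2}\cdot 3}\to 0$... wait, that would give $\Xi\to (\tfrac12)^K-1$, not $-1$; I should double-check, but in any case the limit is a negative constant, which suffices. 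For monotonicity I would differentiate $\Xi$, or more robustly argue that each of the two factors $\frac{\lambda(N)}{N(2^{h(N)}+1)}+1$ and $H(N)^K$ is positive and strictly decreasing in $N$ (using $h$ decreasing, hence $H$ decreasing, hence $H^K$ decreasing; and for the first factor, $\lambda(N)\ln2$-type manipulation plus $2^{h(N)}$ decreasing, $N$ increasing), so their product is strictly decreasing, hence crosses zero exactly once. Finally, since $T'$ has the same sign as $\Xi$, $T$ increases then decreases, proving quasi-concavity, and the integer optimum is among $\{\lceil N^*\rceil,\lfloor N^*\rfloor\}$ by unimodality.

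The main obstacle I anticipate is the monotonicity claim for the first factor $\phi(N)\triangleq\frac{\lambda(N)}{N(2^{h(N)}+1)}$: its numerator $\lambda(N)=K\ln2\,(B+\frac t2\sqrt N)$ is \emph{increasing} in $N$, while $\frac{1}{N(2^{h(N)}+1)}$ has competing effects — $N$ in the denominator pushes it down, but $2^{h(N)}$ decreasing (toward $1$) pushes $\frac{1}{2^{h(N)}+1}$ up. Near $N\to 0^+$, $2^{h(N)}$ dominates $N$, so whether $\phi$ actually diverges and in particular is monotone requires a genuine estimate; I would handle this by substituting $h$ explicitly and bounding $2^{h(N)}\ge 2^{B/N}$, which grows faster than any power of $1/N$, so $\phi(N)\to 0$ as $N\to 0^+$ — contradicting my earlier guess, meaning I should re-examine the boundary behaviour of $\Xi$ at $0$ as well (likely $\Xi(0^+)=H(0^+)^K-1=0$, a removable issue, and the true sign information comes from $\Xi$ being positive on a punctured neighbourhood, established via a first-order expansion of $\Xi$ in $N$ near $0$). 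This delicacy at the left endpoint, and proving genuine monotonicity of $\Xi$ rather than just its endpoint signs, is where the real work lies; if strict global monotonicity of $\Xi$ proves awkward, the fallback is to show $T'(N)=0\iff\Xi(N)=0$ and that $T(N)\to 0$ at both ends of $(0,\infty)$ with $T>0$ in between, which already forces a unique interior maximizer once $\Xi$ is shown to have a unique zero.
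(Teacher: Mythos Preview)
Your derivation that $T'(N)=\frac{B}{N^2}\Xi(N)$ is correct and matches the paper. The gap is in the second step: $\Xi$ is \emph{not} monotone decreasing, and your own diagnostic at the end is on target---the factor $\phi(N)=\lambda(N)/\bigl(N(2^{h(N)}+1)\bigr)$ tends to $0$ as $N\to 0^+$ (since $2^{h(N)}\ge 2^{B/N}$ dominates every power of $1/N$), so $\Xi(0^+)=0$, not $+\infty$. In fact $\Xi$ first \emph{increases} and then decreases on the relevant range, so the product-of-decreasing-factors argument cannot work. Your fallback of invoking $T\to 0$ at both endpoints is also insufficient: it guarantees an interior maximizer exists, but not that it is unique, which is exactly what quasi-concavity and the phrase ``$N^*$ is the unique root'' require.

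The paper's fix is to differentiate once more. Writing $\Xi'(N)=-\frac{H(N)^K}{N^2(2^{h(N)}+1)}\,\Xi_1(N)$ for an explicit auxiliary function $\Xi_1$, it shows (i) $\Xi_1(N)>0$ automatically once $K>2^{h(N)}$, i.e.\ for $N$ beyond $N_1\triangleq h^{-1}(\log_2 K)$, and (ii) on the complementary range $K\le 2^{h(N)}$ one has $\Xi_1'(N)>0$, so $\Xi_1$ is increasing there. Together with the endpoint checks $\Xi_1(1)<0$ (this step uses $2^{B+t}\gg 1$, i.e.\ $B$ is not tiny; the paper assumes $B>50$) and $\Xi_1(N_1)>0$, one concludes that $\Xi_1$ changes sign exactly once on $[1,\infty)$, hence $\Xi$ is unimodal (up, then down). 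Combined with $\Xi(1)>0$ and $\lim_{N\to\infty}\Xi(N)<0$, this forces a unique zero $N^*$ and the single sign change of $T'$ needed for quasi-concavity. The missing idea in your plan is that one must go to the level of $\Xi'$ to pin down the shape of $\Xi$; trying to argue monotonicity of $\Xi$ directly is a dead end because it is simply false.
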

\vspace{-7mm}
\begin{proof}
The proof is provided in Appendix \ref{prooftheorem1}.
\end{proof}

According to Theorem \ref{theorem2}, $N^*$ that satisfies (\ref{equation24}) can be searched by using the bisection method within $N\geq1$. 
Then the following corollary provides some insights into the behavior of $N$.
\newtheorem{corollary}{Corollary}
\begin{corollary}
The optimal $N^*$ for (\ref{equation13a}) increases with $B$, while it decreases as $K$ increases.
\label{corollary1}
\end{corollary}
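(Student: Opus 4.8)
\emph{Proof proposal.} The plan is to treat $N^*$ as the implicit solution of $\Xi(N)=0$ in (\ref{equation24}) and track how this root moves under perturbations of $B$ and $K$, exploiting the sign pattern of $\Xi$ that is already available from Theorem~\ref{theorem2}.

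First I would record the precise link between the objective and $\Xi$: differentiating $T=\frac{B}{N}\big(1-H(N)^K\big)$ in (\ref{equation20}) and regrouping (using $H'(N)=H(1-H)\ln2\,h'(N)$ with $h'(N)=-\frac{1}{N^2}\big(B+\frac{t}{2}\sqrt N\big)$) gives the identity $T'(N)=\frac{B}{N^2}\,\Xi(N)$. Hence quasi-concavity of $T$ forces $\Xi>0$ on $(0,N^*)$, $\Xi(N^*)=0$, and $\Xi<0$ on $(N^*,\infty)$; since the proof of Theorem~\ref{theorem2} shows $\Xi$ is strictly decreasing near its unique positive root, $\left.\partial\Xi/\partial N\right|_{N=N^*}<0$. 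Introducing $c_0(N)\triangleq\frac{\ln2\,(B+\frac{t}{2}\sqrt N)}{N\,(2^{h(N)}+1)}$, which does not involve $K$, so that $\Xi=(1+Kc_0)H^K-1$, the defining relation at $N=N^*$ reads $(1+Kc_0)H^K=1$, i.e. $1+Kc_0=H^{-K}=(1+2^{-h})^K$; this identity is what lets me eliminate $c_0$ in the sign tests below.

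By the implicit function theorem (applicable since $\partial\Xi/\partial N\neq0$ at the root and $\Xi$ is smooth), $\frac{dN^*}{d\theta}=-\frac{\partial\Xi/\partial\theta}{\partial\Xi/\partial N}\big|_{N=N^*}$ for $\theta\in\{B,K\}$; the denominator being negative, it suffices to sign the numerators at $N^*$. For $\theta=K$, a direct computation gives $\partial\Xi/\partial K=H^K\big[c_0+(1+Kc_0)\ln H\big]$; substituting $1+Kc_0=H^{-K}$ and $c_0=(H^{-K}-1)/K$ collapses this to $\frac{1-H^K}{K}+\ln H$, and with $z\triangleq-K\ln H>0$ the claimed negativity becomes $1-e^{-z}<z$, which is elementary, so $dN^*/dK<0$. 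For $\theta=B$, differentiating through both $h(N)$ and $\lambda(N)$ and simplifying (using $H(1-H)=2^{h}/(2^{h}+1)^2$) yields $\partial\Xi/\partial B=\frac{K\ln2\,H^K}{N(2^{h}+1)}\big[\,2+c_0(K-2^{h})\,\big]$, so I must show $2+c_0(K-2^{h})>0$. With $a\triangleq2^{-h}\in(0,1)$ and $c_0=\big((1+a)^K-1\big)/K$ from the identity above, $2+c_0(K-2^{h})=1+(1+a)^K-\frac{(1+a)^K-1}{aK}$, which is positive iff $aK\big(1+(1+a)^K\big)>(1+a)^K-1$; for $aK\ge1$ this is immediate, and for $aK<1$ it is equivalent to $(1+a)^K<\frac{1+aK}{1-aK}$, which follows from $K\ln(1+a)<Ka=aK<2aK<\ln\frac{1+aK}{1-aK}$. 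Hence $dN^*/dB>0$. Since these sign statements hold for every admissible $(B,K)$, integrating $dN^*/dB$ and $dN^*/dK$ gives the global monotonicity, and since $N^*$ is monotone in the continuous variable $K$ it is a fortiori monotone along integers.

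The step I expect to be the real obstacle is signing $\partial\Xi/\partial B$: unlike the $K$-derivative it is not sign-definite term by term, because the factor $1+\lambda(N)/(N(2^{h}+1))$ increases in $B$ through $\lambda$ but decreases through $2^{h}$, and a crude bound on $c_0\,2^{h}$ is too weak (it fails once $h$ is large). The resolution is to evaluate the derivative \emph{exactly at} $N^*$ and use $\Xi(N^*)=0$ to replace $c_0$ by $\big((1+2^{-h})^K-1\big)/K$ before attempting any estimate; after that substitution the inequality reduces to the elementary bounds $\ln(1+a)<a$ and $\ln\frac{1+x}{1-x}>2x$. The same ``evaluate at the root, then simplify'' device is what makes the $K$-derivative transparent, so organizing the argument around the identity $1+Kc_0=H^{-K}$ is the crux.
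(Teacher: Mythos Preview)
Your proposal is correct and follows essentially the same route as the paper: implicit differentiation of $\Xi(N)=0$, with $\partial\Xi/\partial N|_{N^*}<0$ from Theorem~\ref{theorem2} reducing the question to the signs of $\partial\Xi/\partial K$ and $\partial\Xi/\partial B$ at $N^*$; your treatment of the $K$-derivative is identical to the paper's (both collapse to $\frac{(1-H^K)+\ln(H^K)}{K}<0$ via the optimality identity). The only substantive difference is in signing $\partial\Xi/\partial B$: the paper observes that the bracket $2K+\frac{\lambda(N)(K-2^{h})}{N(2^{h}+1)}$ equals $\frac{K}{\lambda(N)}\bigl(\Xi_1(N)+\frac{\ln2}{4}Kt\sqrt{N}\bigr)$ and invokes $\Xi_1(N^*)>0$ from the proof of Theorem~\ref{theorem2}, whereas you substitute the optimality relation $1+Kc_0=(1+2^{-h})^K$ and reduce to the elementary bounds $\ln(1+a)<a$ and $\ln\tfrac{1+x}{1-x}>2x$. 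Both arguments are valid; the paper's is shorter because it recycles work already done, while yours is self-contained and does not require tracking $\Xi_1$.
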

\begin{proof}
The proof is shown in Appendix \ref{appendix4}.
\end{proof}
Then the optimal value for (\ref{equation13a}) denoted by $T(N^*)$ can be evaluated by substituting $N^*$ in (\ref{equation20}). The following corollary will provide the impact of the related system parameters on $T(N^*)$.
\begin{corollary}
The optimal $T(N^*)$ of (\ref{equation13a}) increases with $B$, $\bar\epsilon$, $\bar\delta$ and $K$.
\label{corollary2}
\end{corollary}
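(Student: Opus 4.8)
The strategy is to separate the four parameters: for $K$, $\bar\epsilon$, and $\bar\delta$ the high‑SNR effective throughput $T(N)$ in (\ref{equation20}) is, at every fixed $N$, monotone in the parameter, so a pointwise‑domination argument transfers the monotonicity directly to the optimized value $T(N^*)$; for $B$ this pointwise monotonicity fails, and one must instead invoke the first‑order optimality condition $\Xi(N^*)=0$ from (\ref{equation24}) of Theorem \ref{theorem2} together with the envelope theorem.

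For $K$ (an integer), note that $0<H(N)<1$ for all $N>0$, hence $H(N)^{K+1}<H(N)^K$ and therefore $T(N;K+1)>T(N;K)$ for every $N$; taking the maximum over $N$ on both sides gives $T(N^*;K+1)>T(N^*;K)$. For $\bar\epsilon$ and $\bar\delta$, these enter $T$ only through $t=\bigl(Q^{-1}(\bar\epsilon)+Q^{-1}(\bar\delta)\bigr)/\ln 2$ inside $h(N)=B/N+t/\sqrt{N}$; since $Q^{-1}(\cdot)$ is strictly decreasing, increasing $\bar\epsilon$ (or $\bar\delta$) lowers $t$, hence $h(N)$, hence $H(N)$, hence raises $1-H(N)^K$, so $T(N;\cdot)$ increases in $\bar\epsilon$ and in $\bar\delta$ at every fixed $N$ — equivalently $\partial T/\partial t=-\tfrac{BK\ln 2}{N^{3/2}}H(N)^K\bigl(1-H(N)\bigr)<0$ and $dt/d\bar\epsilon,\,dt/d\bar\delta<0$ — and maximizing over $N$ yields the claim.

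The case of $B$ is the obstacle: raising $B$ increases the prefactor $B/N$ but also increases $h(N)$ and hence $H(N)$, shrinking $1-H(N)^K$, so there is no pointwise ordering. Here I would apply the envelope theorem: by Theorem \ref{theorem2}, $N^*$ is the stationary point of the quasi-concave map $N\mapsto T(N)$ (so $\partial T/\partial N=0$ there, i.e.\ $\Xi(N^*)=0$), and $N^*(B)$ is differentiable by the implicit function theorem applied to $\Xi$, whence $\tfrac{d}{dB}T(N^*)=\bigl.\tfrac{\partial T}{\partial B}\bigr|_{N=N^*}$. Differentiating (\ref{equation20}) with $\partial H/\partial h=H(1-H)\ln 2$ gives
\begin{align}
\frac{\partial T}{\partial B}=\frac{1}{N}\bigl(1-H(N)^K\bigr)-\frac{BK\ln 2}{N^2}\,H(N)^K\bigl(1-H(N)\bigr), \nonumber
\end{align}
which is not visibly of one sign. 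The key step is to substitute $\Xi(N^*)=0$ in the form $1-H(N^*)^K=\lambda(N^*)/\bigl(N^*(2^{h(N^*)}+1)+\lambda(N^*)\bigr)$ and $H(N^*)^K=N^*(2^{h(N^*)}+1)/\bigl(N^*(2^{h(N^*)}+1)+\lambda(N^*)\bigr)$, together with $1-H(N^*)=1/(2^{h(N^*)}+1)$ and $\lambda(N)=\ln 2\,(B+\tfrac{t}{2}\sqrt{N})K$; the two competing terms then combine into a single fraction whose numerator is $\lambda(N^*)-BK\ln 2=\tfrac{1}{2}Kt\sqrt{N^*}\ln 2$, so that
\begin{align}
\left.\frac{\partial T}{\partial B}\right|_{N=N^*}=\frac{K\,t\,\ln 2}{2\sqrt{N^*}\,\bigl(N^*(2^{h(N^*)}+1)+\lambda(N^*)\bigr)}\;\ge\;0, \nonumber
\end{align}
with equality only in the degenerate case $t=0$. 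Recognizing that this cancellation reduces the two terms to a manifestly nonnegative quantity is, I expect, the only delicate point; the remaining steps (differentiability of $N^*(B)$, and the elementary sign facts for $Q^{-1}$ and for $H(N)\in(0,1)$) are routine bookkeeping.
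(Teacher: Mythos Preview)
Your proof is correct and lands on the same envelope identity $\tfrac{d}{dB}T(N^*)=\partial T/\partial B\big|_{N=N^*}$ that the paper uses for the $B$-case; the paper, however, does not carry the substitution of $\Xi(N^*)=0$ all the way to your closed form $\tfrac{Kt\ln 2}{2\sqrt{N^*}(N^*(2^{h}+1)+\lambda)}$ but instead writes $\partial T/\partial B=-\tfrac{1}{N}\bigl((\tfrac{KB\ln 2}{N(2^{h}+1)}+1)H^K-1\bigr)$ and simply notes this exceeds $-\tfrac{1}{N}\Xi(N^*)=0$ because $\lambda(N)>KB\ln 2$, which is a shorter comparison than your full algebraic reduction, though your version has the merit of exhibiting the exact rate.

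The more visible methodological difference is in the other three parameters: the paper applies the envelope formula uniformly to all of $B,K,\bar\epsilon,\bar\delta$, computing $\partial T/\partial K=-\tfrac{B}{N}H^K\ln H>0$ and $\partial T/\partial t<0$ at $N=N^*$, whereas you bypass the envelope theorem for $K,\bar\epsilon,\bar\delta$ with a pointwise-domination argument. Your route is more elementary there (no differentiability of $N^*$ is needed, and the integer nature of $K$ is handled cleanly), while the paper's route is more uniform in presentation; both are valid and neither dominates the other.
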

\begin{proof}
The proof can be seen in Appendix \ref{appendix5}.
\end{proof}
In addition, it can be observed from (\ref{equation20}) that $T$ obtained by high-SNR approximation is independent with
$\overline\gamma$, so neither $N^*$ nor $T(N^*)$ is influenced by $\overline\gamma$. It means that rising $\bar \gamma$ cannot improve the performance in the high SNR domain.

\subsection{Optimization under the Outage Constraint}
The above optimal effective throughput is derived without the outage constraint. To go a step further,
the following corollary shows the optimal blocklength with the outage constraint.

\begin{corollary}
In the high-SNR domain, the optimal blocklength with the outage constraint is
\begin{align}
N^\#= \begin{cases}
\mathop{\arg\max}\limits_{N\in \left \{{\lceil N^* \rceil ,\lfloor N^* \rfloor }\right \}}\,T(N), &\text{$N_0 \leq N^* \leq N_G$,$\sqrt[K]{\zeta}\geq\frac{1}{2}$ }\\
\left \lceil{N_0}\right \rceil ,&\text{$N^* \leq N_0 \leq N_G $,$\sqrt[K]{\zeta}\geq\frac{1}{2}$ }\\
N_G, &\text{$N_0 \leq N_G \leq N^* $,$\sqrt[K]{\zeta}\geq\frac{1}{2}$ }\\
\text{No Value} ,& \text{Otherwise}
\end{cases}\label{equation27}
\end{align}
where $\Delta_2\triangleq t^2+4B\log_2\left(\frac{\sqrt[K]{\zeta}}{1-\sqrt[K]{\zeta}}\right)$, and $N_0\triangleq \frac{4B^2}{(\sqrt{\Delta_2}-t)^2}$.
\label{corollary3}
\end{corollary}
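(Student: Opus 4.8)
The plan is to convert the outage constraint into an explicit lower bound $N\ge N_0$ on the relaxed blocklength, and then to optimise $T$ over the resulting box $[N_0,N_G]$ using the quasi-concavity established in Theorem \ref{theorem2}. First I would use the closed form $p_{out}=H(N)^K$ from (\ref{pout}) to rewrite the constraint $p_{out}\le\zeta$ as $H(N)\le\sqrt[K]{\zeta}$. Since $H(N)=2^{h(N)}/(2^{h(N)}+1)$ is strictly increasing in $h$, and $h(N)=B/N+t/\sqrt N$ is strictly decreasing in $N>0$ with $h(N)\to 0$ as $N\to\infty$, the map $N\mapsto H(N)$ is strictly decreasing, from $1$ (as $N\to 0^{+}$) down to the infimum $1/2$. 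Hence $p_{out}$ is strictly decreasing with infimum $(1/2)^K$, so the constraint is satisfiable by some finite $N$ precisely when $\sqrt[K]{\zeta}\ge \tfrac12$ (the boundary case being read as $N_0=+\infty$, hence ultimately infeasible because $N_G<\infty$); otherwise the problem returns ``No Value''.

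Next, assuming $\sqrt[K]{\zeta}\ge\tfrac12$, I would locate the threshold $N_0$ by solving $H(N_0)=\sqrt[K]{\zeta}$, equivalently $h(N_0)=\log_2\!\big(\sqrt[K]{\zeta}/(1-\sqrt[K]{\zeta})\big)$. Substituting $n=1/\sqrt{N_0}$ turns this into the quadratic $Bn^{2}+tn-\log_2\!\big(\sqrt[K]{\zeta}/(1-\sqrt[K]{\zeta})\big)=0$; taking its unique positive root (positivity follows from $\sqrt[K]{\zeta}>\tfrac12$) and inverting $n=1/\sqrt{N_0}$ yields $N_0=4B^{2}/(\sqrt{\Delta_2}-t)^{2}$ with $\Delta_2=t^{2}+4B\log_2\!\big(\sqrt[K]{\zeta}/(1-\sqrt[K]{\zeta})\big)$, exactly as stated. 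Because $p_{out}$ is decreasing, $p_{out}\le\zeta\iff N\ge N_0$, so the feasible set of (\ref{approxT}) in the high-SNR model is $\{N\in\mathbb{N}_{+}:\,N_0\le N\le N_G\}$, which is empty whenever $\sqrt[K]{\zeta}<\tfrac12$ or $N_0>N_G$.

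Finally I would maximise $T$ in (\ref{equation20}) over this interval. By Theorem \ref{theorem2}, $T$ is (strictly) increasing on $(0,N^*]$ and decreasing on $[N^*,\infty)$, so on the subinterval $[N_0,N_G]$ its maximum is attained at $N^*$ if $N^*\in[N_0,N_G]$, and otherwise at the endpoint of $[N_0,N_G]$ nearest to $N^*$. This gives the three orderings in (\ref{equation27}): if $N_0\le N^*\le N_G$ the unconstrained optimiser is feasible and one rounds it, picking the better of $\{\lceil N^*\rceil,\lfloor N^*\rfloor\}$; if $N^*\le N_0$ the whole feasible interval lies on the decreasing branch, so the optimum is the smallest feasible integer $\lceil N_0\rceil$ (which is $\le N_G$ since $N_G$ is an integer $\ge N_0$); if $N^*\ge N_G$ the interval lies on the increasing branch, so the optimum is $N_G$. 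Collecting these with the infeasible case yields (\ref{equation27}). The step requiring the most care is the rounding at the boundaries --- namely checking that $\lceil N^*\rceil$ and $\lfloor N^*\rfloor$ stay inside $[N_0,N_G]$ in the first case (so that the reported $\arg\max$ is over genuinely feasible integers) and handling the degenerate endpoint $\sqrt[K]{\zeta}=\tfrac12$, where $\Delta_2=t^{2}$ forces $N_0=+\infty$; everything else is the routine ``quasi-concave function restricted to an interval'' argument.
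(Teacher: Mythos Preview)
Your proposal is correct and follows essentially the same route as the paper: substitute the closed-form $p_{out}$ into the constraint to obtain the inequality $h(N)\le\log_2\!\big(\sqrt[K]{\zeta}/(1-\sqrt[K]{\zeta})\big)$, solve it for the threshold $N_0$ (feasible only when $\sqrt[K]{\zeta}\ge\tfrac12$), and then invoke the quasi-concavity of Theorem~\ref{theorem2} to locate the optimum on $[N_0,N_G]$. Your write-up is in fact more explicit than the paper's, which merely states the equivalent inequality (\ref{c31}), records its solution set (\ref{c32}), and then appeals to Theorem~\ref{theorem2} without spelling out the monotonicity of $H$, the quadratic in $1/\sqrt{N}$, or the endpoint analysis; your added remarks about the boundary case $\sqrt[K]{\zeta}=\tfrac12$ and the feasibility of the rounded integers are genuine refinements that the paper leaves implicit.
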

\begin{proof}
By substituting (\ref{pout}) into (\ref{equation13b}), we have
\begin{align}
\frac{B}{N}+\frac{t}{\sqrt{N}}-\log_2\left(\frac{\sqrt[K]{\zeta}}{1-\sqrt[K]{\zeta}}\right) \leq0.
\label{c31}
\end{align}
Accordingly, the range of $N$ satisfying ($\ref{c31}$) is
\begin{equation}
\left\{
             \begin{array}{lr}
             N\in\left[N_0, \infty\right), & \text{$\sqrt[K]{\zeta}\geq\frac{1}{2}$ } \\
             N\in\varnothing, & \text{Otherwises}
             \label{c32}
             \end{array}
\right.
\end{equation}
The proof is completed by substituting (\ref{equation13c}) and (\ref{c32}) into Theorem \ref{theorem2}.
\end{proof}

Then the optimal effective throughput
$T^\#=T(N^\#)$ can be calculated by substituting $N^\#$ into (\ref{equation20}). Notably, 
$N^\#$ is unavailable when $\zeta<2^{-K}$ or $N_0>N_G$, which is due to the excessively strict outage constraint. Specifically, $\zeta<2^{-K}$ can be interpreted as the situation that current system  cannot meet the outage restriction $p_{out}<\zeta$ under the preset reliable and secure constraints $\bar\epsilon$ and $\bar\delta$, while $N_0>N_G$ refers to the situation where the blocklength $N_0$ is increased to meet the outage restriction but exceeds the blocklength range of short packets.

\section{Numerical Simulation}
\label{NumericalSimulation}
In this section, we provide numerical results to show the reliable and secure performance of short-packet communication systems. The parameter settings are as follows, unless otherwise specified: $\overline\gamma = 10 \text{dB}$ ($\overline\gamma = 20 \text{dB}$ in the high SNR domain), $\bar\epsilon=10^{-3}$, $\bar\delta= 10^{-3}$, $K = 8$, $B =400$, $ N_G=1000$. All the simulation results shown in this paper are obtained by averaging over $1, 000, 000$ channel realizations.
\subsection{Distribution Approximation}

Fig. \ref{fig4} plots the effective throughput $T$ versus the blocklength $N$ and the transmitted bit number per block $B$ without restriction of outage probability, where camber (a) shows the simulation result, while camber (b) presents the approximate deviation of $T$ obtained by distribution approximation method.
For each $B$, the optimal $N$ and $T$ obtained by approximation is given in points (c). From Fig. \ref{fig4}, it is confirmed that the optimal $T$ can be obtained by carefully selecting $N$ with a given $B$. Moreover, as can be observed in camber (b), the approximate error of $T$ is within an acceptable range intuitively. Consequently, the optimal blocklength $N$ and the corresponding $T$ obtained by one-dimensional search with the approximation result as shown in (c) fit well with the simulation result.
%
\begin{figure}[!t]
\begin{center}
\includegraphics[width=3 in]{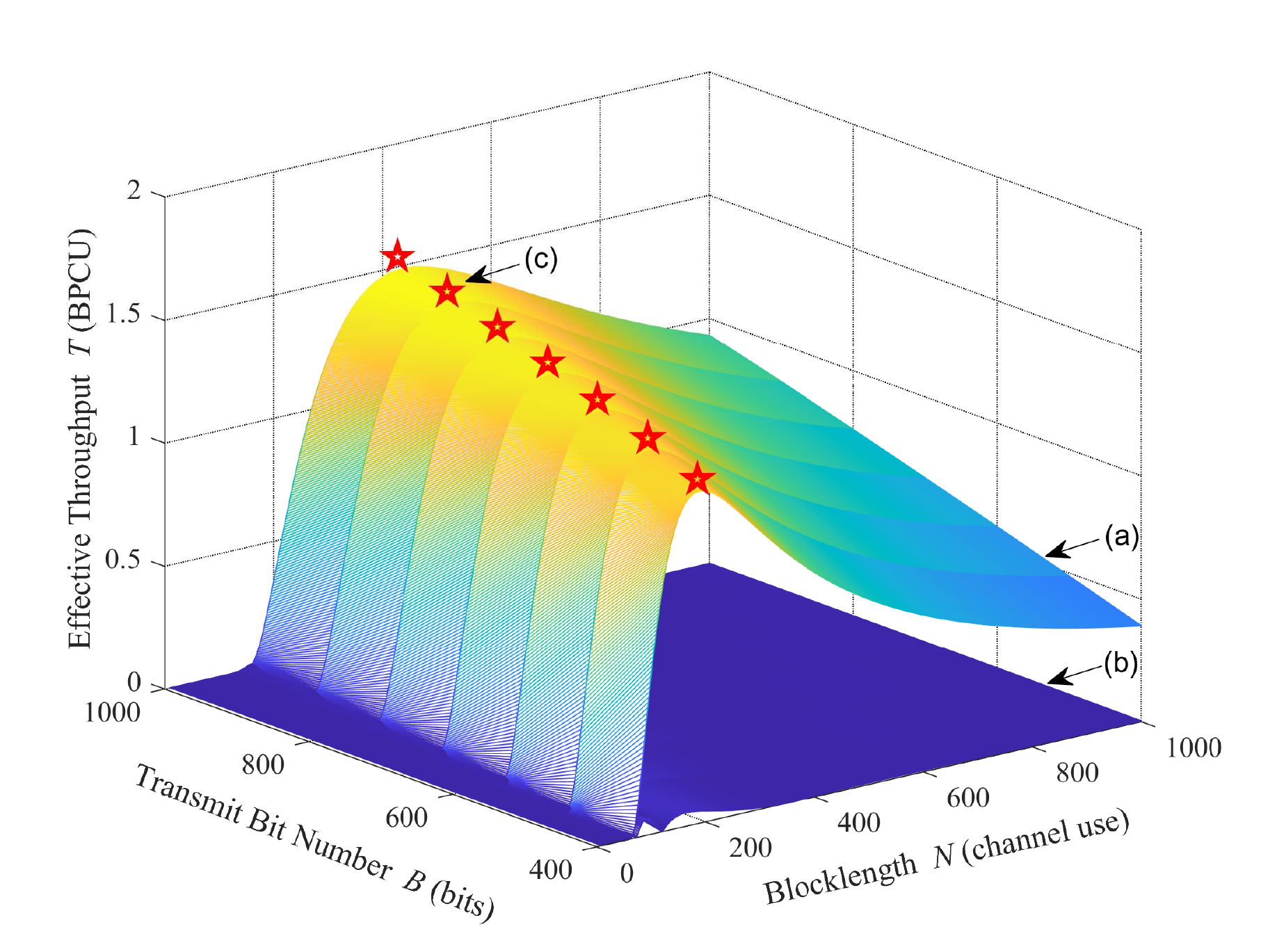}
\end{center}
\vspace{-7mm}
\caption{\small{The effective throughput $T$ without the outage constraint versus $N$ and $B$ with $\overline\gamma=10\text{dB}$. Camber (a) is the simulation results of $T$. Camber (b) is the deviation of $T$ between simulation and approximation result. Points (c) are the optimal blocklength obtained by One-dimensional search and effective throughput under a certain $B$.}}
\label{fig4}
\vspace{-5mm}
\end{figure}

\begin{figure}[!t]
\begin{center}
\includegraphics[width=3 in]{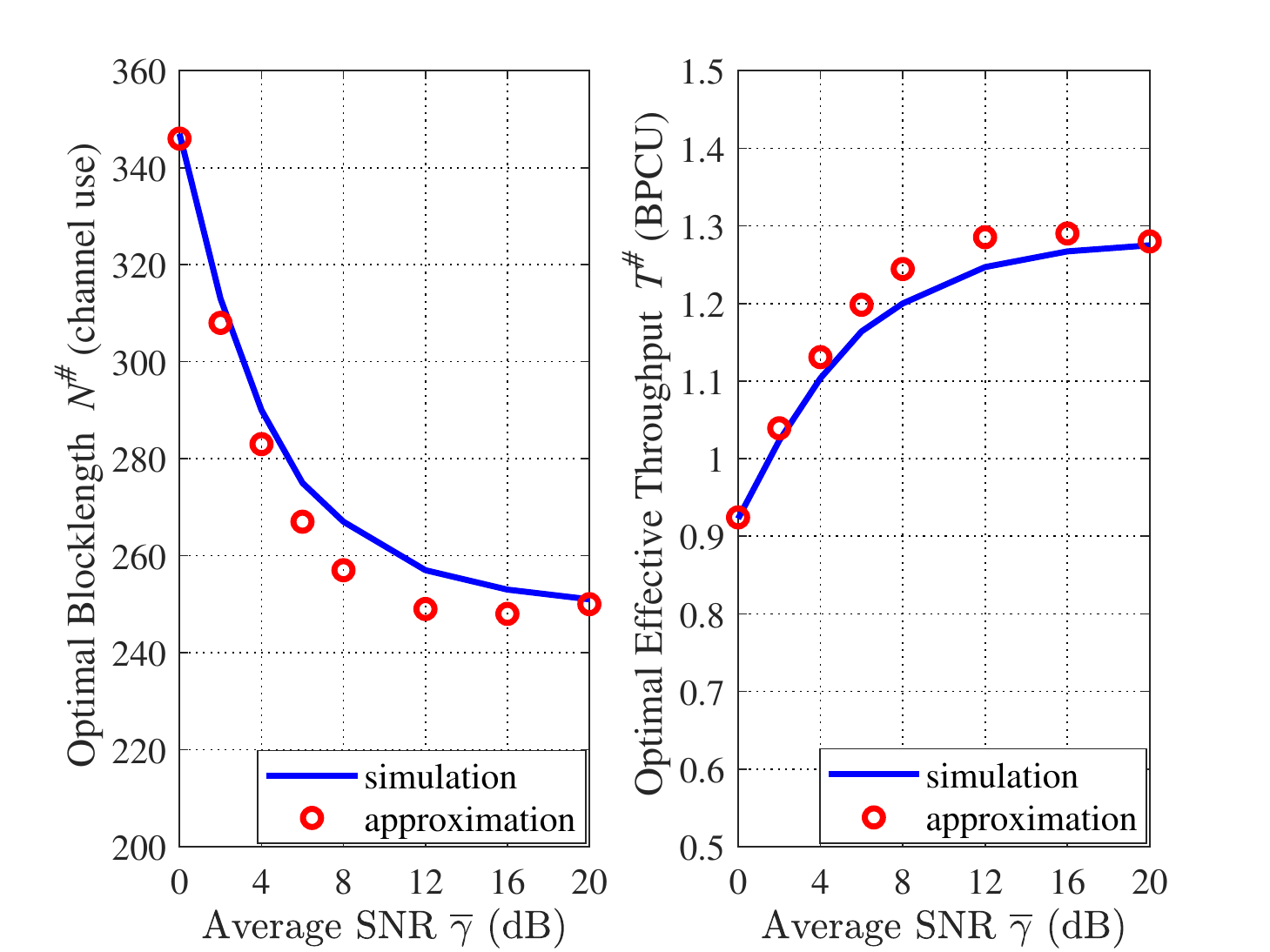}
\end{center}
\vspace{-5mm}
\caption{\small{The optimal blocklength $N^\#$ and effective throughput $T^\#$ versus the average SNR $\overline\gamma$ ($\zeta=0.2$).}}
\label{fig5}
\vspace{-10mm}
\end{figure}
Fig. \ref{fig5} shows the optimal blocklength $N^\#$ and the corresponding optimal effective throughput $T^\#$ versus the average SNR $\overline\gamma$ with the outage-probability constraint. As shown in the figure, there is a gap between simulation and approximation results, which comes from the approximation error introduced by using the distribution approximation technique to approximate the CDF of $R_s$. Specifically, letting $N_{simu}$ and $N_{appro}$ denote the simulation and approximation results, respectively, the maximum relative error of $N^\#$ defined as $\frac{|N_{simu}-N_{appro}|}{N_{simu}}\times 100\%$ is $3.75\%$, while the maximum relative deviation of $T^\#$ is $3.70\%$, which are acceptable. That is, the results obtained by distribution approximation method roughly match the simulation results. In addition, we can observe from Fig. \ref{fig5} that the higher $\overline\gamma$ the lower $N^\#$, as well as the higher $T^\#$. The essential reason for this phenomenon is that as the signal quality improves, the coding rate can be appropriately increased. That is, reduce the blocklength to increase the effective throughput. What's more, as $\overline\gamma$ goes up, the change trend with $N^\#$ and $T^\#$ slows down, which is consistent with the conclusion that neither $N^\#$ nor $T^\#$ is influenced by $\bar\gamma$ in the high SNR domain.
\subsection{High SNR Domain}
\begin{figure}[!t]
\begin{center}
\includegraphics[width=3 in]{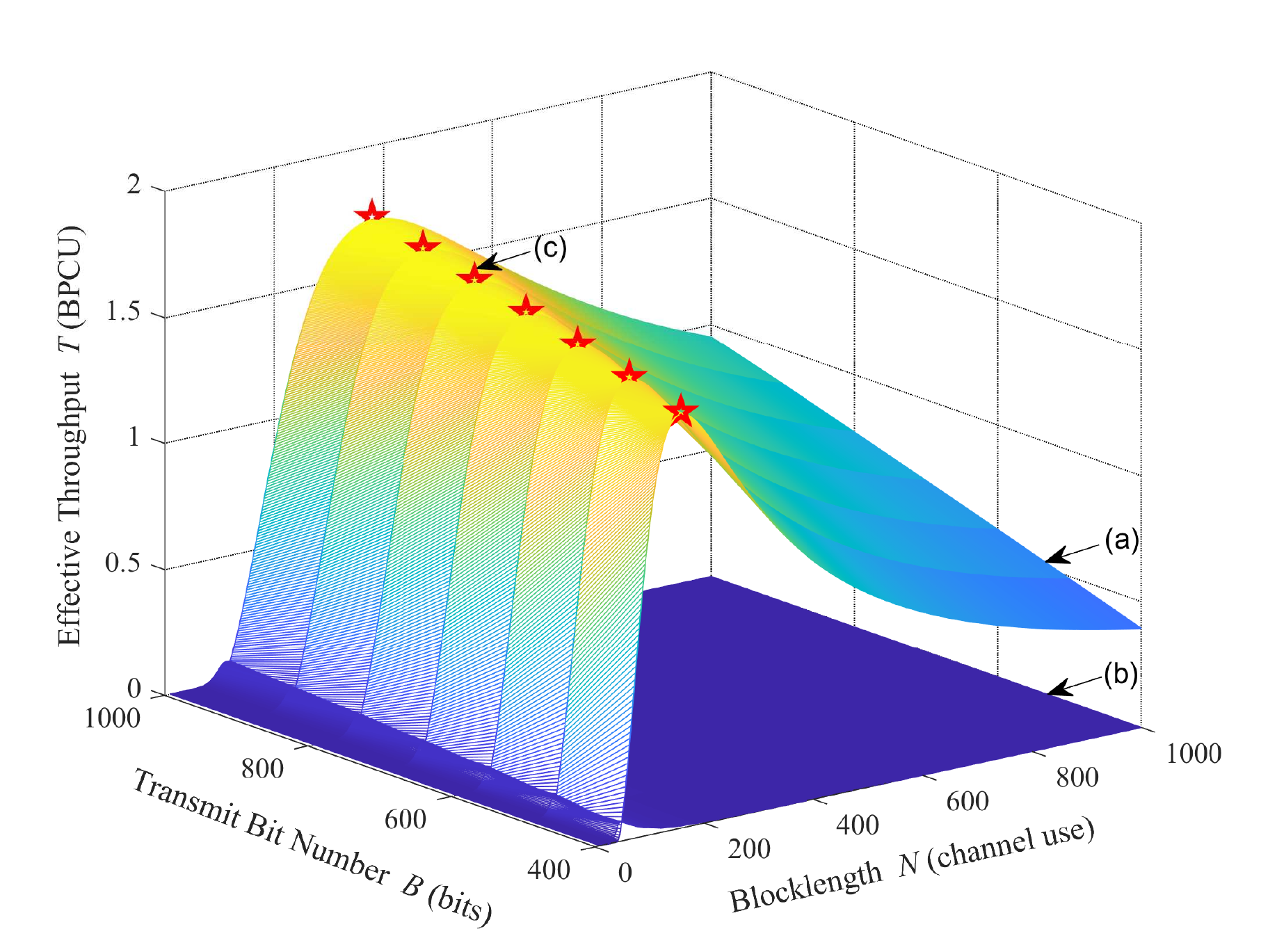}
\end{center}
\vspace{-5mm}
\caption{\small{The effective throughput $T$ without the outage constraint versus $N$ and $B$ with $\overline\gamma=20\text{dB}$. Camber (a) is the simulation results of $T$. Camber (b) is the deviation of $T$ between simulation and approximation result. Points (c) are the optimal blocklength obtained by Binary search and effective throughput under a certain $B$.}}
\label{fig6}
\vspace{-5mm}
\end{figure}
Fig. \ref{fig6} depicts the approximate effect of effective throughput $T$ in the high SNR domain without the outage constraint. The deviation of $T$ between simulation and approximation results is shown in camber (b) and it is obvious that the deviation is negligible. Fig. \ref{fig6} confirms the conclusion that $T$ is a quasi-concave function of the relaxed continuous blocklength given in Theorem \ref{theorem2}, and the method proposed in Theorem \ref{theorem2} can be used to accurately find the optimal blocklength for effective throughput maximization, as ($c$) shows, the optimal blocklength $N^*$ obtained by binary search coincide well with the simulation result.

\begin{figure}[!t]
\begin{center}
\includegraphics[width=3 in]{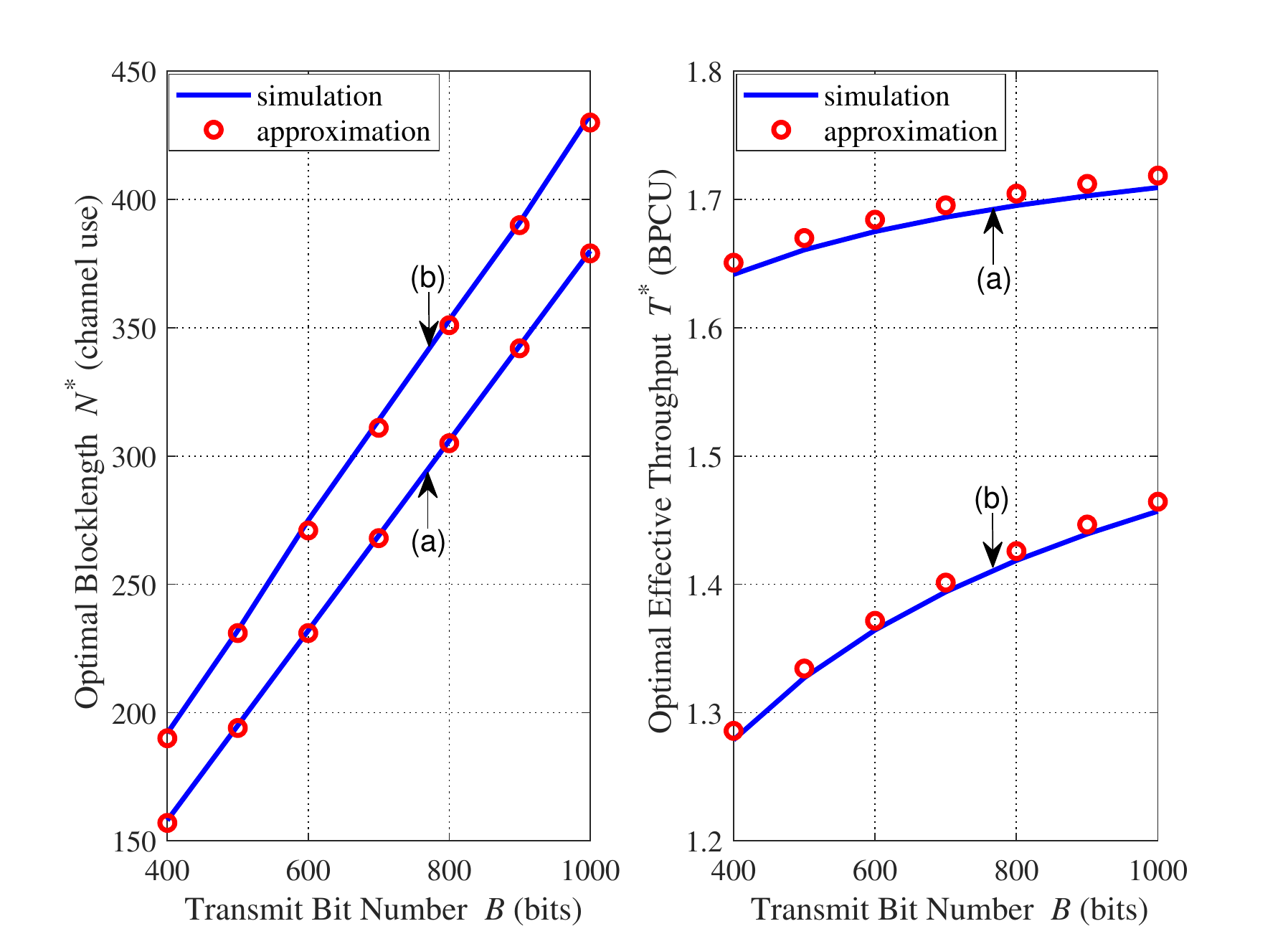}
\end{center}
\vspace{-5mm}
\caption{\small{The optimal blocklength $N^*$ and the effective throughput $T^*$ without the outage constraint versus transmitted bit number per block $B$, the reliable constraint $\bar\epsilon$ and the secure constraint $\bar\delta$. (a) $\bar\epsilon=10^{-1},\bar\delta=10^{-1}$, (b) $\bar\epsilon=10^{-5},\bar\delta=10^{-5}$.}}
\label{fig7}
\vspace{-10mm}
\end{figure}
\begin{figure}[!t]
\begin{center}
\includegraphics[width=3 in]{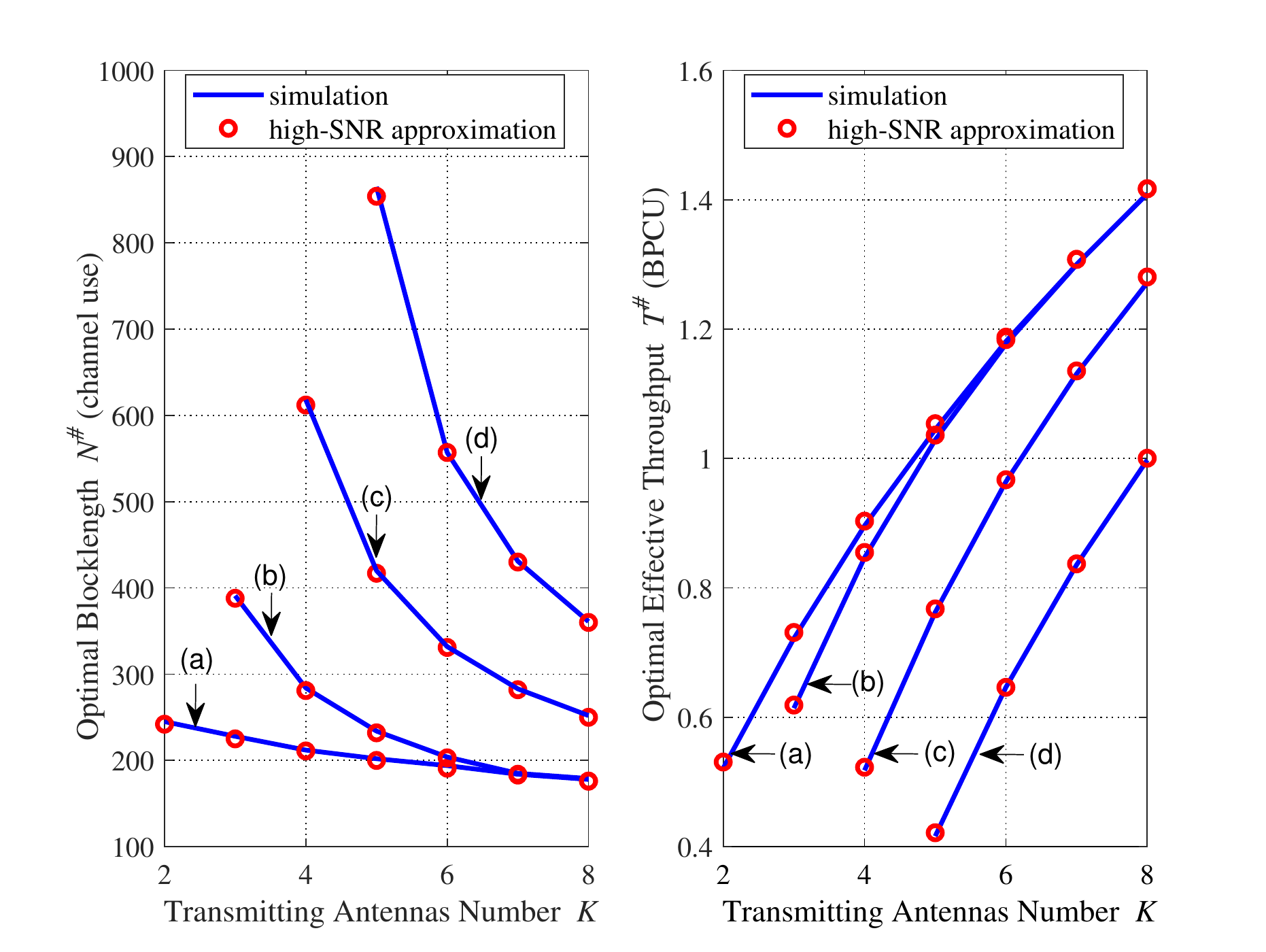}
\end{center}
\vspace{-5mm}
\caption{\small{The optimal blocklength $N^{\#}$ and the optimal effective throughput $T^\#$ 
versus the number of transmitting antennas $K$. (a) Without restriction of outage probability, (b) $\zeta=0.4$, (c) $\zeta=0.2$, (d) $\zeta=0.1$.}}
\label{fig8}
\vspace{-10mm}
\end{figure}
In Fig. \ref{fig7}, without the outage constraint, the impacts of transmitted bit number per block $B$, as well as the reliable and secure constraints $\{\bar\epsilon, \bar\delta\}$ on $N^*$ and $T^*$ are presented, respectively. As can be observed, the simulation results coincide well with the analytical results in the high SNR domain. It is shown in Fig. \ref{fig7} that an increasing $B$ requires more channel uses, which is consistent with intuition and confirms the relative theoretical analysis in Corollary \ref{corollary1}. Furthermore, comparing the two cases of (a) and (b), it can be observed from Fig. \ref{fig7} that $N^*$ increases as $\bar\epsilon$ and $\bar\delta$ decrease, which indicates that when the reliable or secure constraint is relaxed, Alice can appropriately increase the coding rate. These phenomena are consistent with the theoretical analysis in Corollary \ref{corollary1}, and they all reflect the trade-off between reliable-secure performance and latency performance.

Fig. \ref{fig7} also shows that the optimal effective throughput slowly increases when $B$ increases. It is worthy noting that the increasement of effective throughput cannot be expected by continuously increasing $B$. As a larger $B$ result in the increasement of $N$, which has an upper bound and cannot increase arbitrarily for short-packet communications. Additionally, when the restriction on performance is relaxed, the effective throughput will be improved, which is consistent with the conclusion in Corollary \ref{corollary2}.

%

Fig. \ref{fig8} plots the optimal blocklength $N{^\#}$ obtained from corollary \ref{corollary3} versus the number of transmitting antennas $K$ under different restriction of outage probability $\zeta$. As depicted in the graph, a larger $K$ causes a decrease with $N^\#$. The reason behind this is that as $K$ increases, the quality of the received signal at Bob improves, which can be used to increase the coding rate and improve the delay performance with shorter blocklength. In addition, comparing  the four cases of ($a$), ($b$), ($c$) and ($d$), we will find that the restriction $p_{so}<\zeta$ will affect the choice of $N^\#$. Specifically, when $K$ is not large enough, $N^*$ obtained by solving (\ref{equation24}) cannot meet the restriction of $p_{so}<\zeta$, and the optimal blocklength should be larger at this time with $N^\#=\left \lceil{N_0}\right \rceil$ according to Corollary \ref{corollary3}. Furthermore, as $\zeta$ decline, in other words, the limitation becomes stricter, the optimal blocklength $N^\#$ should be larger, which is consistent with the analysis in Section \ref{IV} and reflects the trade-off between transmission performance and coding rate.

Fig. \ref{fig8} also shows the impact of transmitting antenna number $K$ on the optimal effective throughput $T^\#$. It can be observed from the figure that $T^\#$ steadily increases as $K$ grows, result from the improvement of channel quality can support greater information throughput, which coincide with the analysis in Corollary  \ref{corollary2}. Comparing the four cases, a tighter outage-probability constraint will decrease the throughput. The reason behind this is that the blocklength is expected to be larger so as to meet the outage constraint at this time, which considerably reduces the rate.

\section{Conclusion}
\label{Conclusion}
In this paper, we have analyzed and evaluated the reliable and secure performance of short-packet communication systems. Based on the characteristics of short-packet communications, the paper has proposed the concept of outage probability. On this basis, the effective throughput has further been defined and used as the performance metric to evaluate the reliable and secure performance. In particular, a general analytical framework and the case of high SNR have been respectively proposed and investigated. Furthermore, the effects of system parameters as well as the reliable and secure constraints on the optimal blocklength and the optimal effective throughput have been analyzed.  Numerical results have verified the feasibility and accuracy of the proposed approximation method, and confirm the influence of the main system parameters on the selection of the optimal blocklength and the optimal effective throughput. The approximate framework proposed in this paper is of potential interest in the analysis of the performance of other communication scenarios.
\begin{appendices}
\section{ Proof for Lemma \ref{lemma1}}
\label{appendixl0}
As $\Phi(x)$  monotonically increases with $x$, the equivalent form of (\ref{approxTb}) can be expressed as
\begin{align}
\frac{\frac{B}{N}-m_{\bar R_s}}{\sigma_{\bar R_s}}\leq\Phi^{-1}(\zeta).
\label{A0}
\end{align}
For $\zeta<0.5$, we have $\Phi^{-1}(\zeta)<0$, thus (\ref{A0}) holds when
\begin{align}
\frac{B}{N}-m_{\bar R_s}<0, \label{L1a}
\end{align}
\begin{align}
\left(\frac{B}{N}-m_{\bar R_s}\right)^2>\left(\phi^{-1}(\zeta)\right)^2\sigma_{\bar R_s}^2 \label{L1b}
\end{align}
hold at the same time. Further, as $\mu_1>0$, it can be easily deduced that (\ref{L1a}) is equivalent to $0<n<\frac{\sqrt{\mu_1^2+4B\mu_0}-\mu_1}{2B}$ if $\mu_0>0$, while (\ref{L1b}) is equivalent to $g(n)\geq0$.
Further, we study the properties of $g(n)$. The derivatives of $g(n)$ can be represented as
\begin{align}
g^{(1)}(n)=4an^3+3bn^2+2cn+d,~ g^{(2)}(n)=12an^2+6bn+2c,~ g^{(3)}(n)=24an+6b, \nonumber
\end{align}
where $g^{(3)}(n)>0$ since $a,b>0$, thus $g^{(2)}(n)$ increases with $n$. Then we will discuss the changing trend of $g(n)$ with $n>0$ under different cases by means of the derivatives of $g(n)$.

1) $c\geq0$:
$g^{(1)}(n)$ increases with $n$ as $g^{(2)}(n)>0$. Thus $g(n)$ monotonically increases with $n$ if $g^{(1)}(0)=d\geq0$, otherwise $g(n)$ decreases first and then increases with $n$ when $d<0$.

2) $c<0$:
It can be easily deduced that the unique positive root of $g^{(2)}(n)=0$ is $t_0=\frac{\sqrt{9b^2-24ac}-3b}{12a}>0$. Accordingly,
$g^{(1)}(n)$ decreases first and then increases with $n$, and will reach the minimum value when $n=t_0$. Similarly, $g(n)$ monotonically increases with $n$ if $g^{(1)}(t_0)\geq0$, it will decreases first and then increases with $n$ when  $d\leq0$, while $g(n)$ increases first, decreases and then increases with $n$ when $g^{(1)}(t_0)<0$ and $d>0$.

The proof is finished with the above discussions.
\section{ Proof for Lemma \ref{lemma2}}
\label{appendixl1}
Based on the the changing trend of $g(n)$ with $n>0$ given in Lemma \ref{lemma1}, the value range of $n$ satisfying (\ref{g}) under different cases can be discussed as follows. Notably, the discussion is conducted within $n>0$.

Case A: $c,d\geq0$ or $c<0$, $g^{(1)}(t_0)\geq0$

As $g(n)$ monotonically increases with $n>0$, $g(n)=0$ has $0$ or $1$ zero crossing point (ZCP).

1) if $g(n)>g(0)=e\geq0$, $g(n)>0$, and the corresponding $N_\Omega\in(0,\infty)$.

2) if $e<0$, $g(n)$ is negative first and then positive with an increasing $n$. To determine the ZCPs of $g(n)$, let $g(n)=0$ and solve the quartic equation, and then four roots including multiple roots, complex roots and negative roots will be obtained. Pick out all the positive real roots with different values and put them in an increasing order, denoted by $n_i, i=1,2,\cdots,m$. Let $n_m$ as the maximum real value root, then the corresponding $N_\Omega\in\left(0, \frac{1}{n_m^2}\right]$.

Case B: $c\geq0$, $d<0$ or $c<0, d\leq0$

$g(n)$ decreases first and then increases with $n$, and reach the minimum value when $g^{(1)}(n)=0$. Let $g^{(1)}(n)=0$ and solve the cubic equation, there must exist a maximum positive real root as $a>0$ and $g(n)$ decreases first and then increases with $n>0$, which can be denoted by $t_g$ as (\ref{t0}) shows. $g(n)=0$ has $0-2$ ZCPs with $n>0$ for different $g(0)$ and $g(t_g)$.

1) if $g(t_g)\geq0$, $g(n)>0$, then the corresponding $N_\Omega\in(0,\infty)$.

2) if $e\leq0$, there exists only one ZCP, $N_\Omega\in\left(0, \frac{1}{n_m^2}\right]$.

3) if $g(t_g)<0$ and $e>0$, there exists two ZCPs, $N_\Omega\in\left(0, \frac{1}{n_m^2}\right]\cup\left[\frac{1}{n_{m-1}^2}, \infty\right)$.

Case C: $c,g^{(1)}(t_0)<0, d>0$

Under this case, $g^{(1)}(n)$ has two ZCPs, represented as $t_{l}$ and $t_{g}$ with $t_{l}<t_{g}$, which can be obtained by (\ref{t0i}). Further, $g(n)$ increases first, decreases and then increases with $n>0$. Thus $g(n)=0$ has $0-3$ ZCPs based on different $g(0)$, $g(t_{l})$ and $g(t_{g})$.

1) if $g(0)=e\geq0$, $g(t_{g})\geq0$, then $g(n)>0$ with $n>0$, $N_\Omega\in(0,\infty)$.

2) if $g(t_{l})\leq0$ or $e<0$, $g(t_{g})\geq0$, there exists one ZCP, then $N_\Omega\in\left(0, \frac{1}{n_m^2}\right]$.

3) if $e\geq0$, $g(t_{g})<0$, there exists two ZCPs, then  $N_\Omega\in\left(0,\frac{1}{n_m^2}\right]\cup\left[\frac{1}{n_{m-1}^2}, \infty\right)$.

4) if $g(t_{l})>0$, $e,g(t_{g})<0$, there exists ZCPs, then $N_\Omega\in\left(0,\frac{1}{n_m^2}\right]\cup\left[\frac{1}{n_{m-1}^2}, \frac{1}{n_{m-2}^2}\right]$.

The proof is completed by summarizing the above conclusions.

\section{ Proof for Lemma \ref{lemma}}
\label{appendix0}
For the high SNR regime, the complementary outage probability (CPOP) of $\tilde C_s$ is defined as
\begin{equation}
\begin{split}
p_c^{\tilde C_s}=Pr\left(\tilde C_s>\tau\right)=Pr\left(\log_2\frac{\gamma_b}{\gamma_e}>\tau\right)=Pr\left(\gamma_b>\mathrm{2}^\tau\gamma_e\right).
\end{split}
\end{equation}
According to the PDFs of $\gamma_b$ and $\gamma_e$, the CPOP of $\tilde C_s$ is
\begin{equation}
\begin{split}
p_c^{\tilde C_s}&=Pr\left(x>\mathrm{2}^\tau y\right)
=\int_0^{+\infty}\frac{1}{\overline\gamma}\mathrm{e}^{-\frac{y}{\overline\gamma}}\left[\int_{\mathrm{2}^\tau y}^{+\infty}\frac{1}{\overline\gamma^K\Gamma(K)}x^{K-1}\mathrm{e}^{-\frac{x}{\overline\gamma}}dx\right]dy\\
&\overset{(a)}{=}\int_0^{+\infty}\mathrm{e}^{-\frac{1+\mathrm{2}^\tau}{\overline\gamma}y}\sum \limits_{m=0} ^{K-1}\frac{(\mathrm{2}^\tau y)^m}{m!\overline\gamma^{m+1}}dy\overset{(b)}{=}\sum \limits_{m=0} ^{K-1}\frac{\mathrm{2}^{\tau m}}{(\mathrm{2}^\tau+1)^{m+1}}=1-\left(\frac{\mathrm{2}^\tau}{\mathrm{2}^\tau+1}\right)^K,
\end{split}
\end{equation}
where ($a$) and ($b$) hold for the integration formula \cite{I.S.Gradshteyn2007}, 3.351.2] and \cite{I.S.Gradshteyn2007}, 3.326.2]. Based on (\ref{equationR}), the CDF of $\tilde R_s$ can be represented as
\begin{align}
F_{\tilde R_s}\left(r\right)= Pr(\tilde C_s-\upsilon\leq r)=1-p_c^{\tilde C_s}\left(r+\upsilon\right)=\left(\frac{\mathrm{2}^{r+\upsilon}}{\mathrm{2}^{r+\upsilon}+1}\right)^K.
\end{align}

\section{ Proof for Theorem \ref{theorem2}}
\label{prooftheorem1}
The partial derivative of the effective throughput $T$ in (\ref{equation20}) w.r.t. $N$ can be deduced as follows
\begin{equation}
\begin{split}
\frac{\partial T}{\partial N}=\frac{B}{N^2}\Xi(N).
\label{derivativeT1}
\end{split}
\end{equation}
Deduce the partial derivative of $\Xi\left(N\right)$ w.r.t. $N$ as
\begin{align}
\frac{\partial \Xi\left(N\right)}{\partial N}
&=-\frac{(H(N))^K}{N^2\left(2^{h(N)}+1\right)}\Xi_1(N),
\end{align}
\begin{align}
\text{with}\qquad   \Xi_1(N)\triangleq\frac{\lambda(N)^2\left(K-2^{h(N)}\right)}{KN\left(2^{h(N)}+1\right)}+2\lambda(N)-\frac{\ln2}{4}Kt\sqrt{N}.
\label{Xi_1}
\end{align}
Obviously, $2\lambda(N)-\frac{\ln2}{4}Kt\sqrt{N}>\frac{3}{2}\lambda(N)>0$ as $\frac{1}{2}\lambda\left(N\right)=\frac{\ln2}{2}KB+\frac{\ln2}{4}Kt\sqrt{N}>\frac{\ln2}{4}Kt\sqrt{N}$, so the sign of $\Xi_1(N)$ is determined by $\left(K-2^{h(N)}\right)$. Specifically, $\Xi_1(N)>0$ holds when $K>2^{h(N)}$, i.e. $N>h^{-1}(\log_2K)$ (as $h(N)$ monotonically decrease with $N$). As for the case $K\leq2^{h(N)}$, the partial derivative of $\Xi_1(N)$ is
\begin{align}
\frac{\partial \Xi_1(N)}{\partial N}
=\frac{\lambda(N)}{N\left(2^{h(N)}+1\right)}\left(\frac{\ln2B}{N}\left(2^{h(N)}-K\right)+\frac{(K+1)\lambda(N)^22^{h(N)}}{K^2N^2\left(2^{h(N)}+1\right)}\right)+\frac{3\ln2Kt}{8\sqrt{N}}>0.\label{dXi_1}
\end{align}
Thus $\Xi_1(N)$ increases monotonically with $N$. Suppose $N_1\triangleq h^{-1}(\log_2K)$,
we have
\begin{align}
\Xi_1(1)&=\frac{\lambda\left(1\right)^2}{1+2^{B+t}}\left(1-\frac{2^{B+t}}{K}\right)+\ln2K\left(2B+\frac{3t}{4}\right)\nonumber\\
&\overset{(a)}{\approx}-\frac{\lambda\left(1\right)^2}{K}+\ln2K\left(2B+\frac{3t}{4}\right)\leq\left(2-\ln2\left(B+\frac{t}{2}\right)\right)\lambda(1)<0,\\
\Xi_1(N_1)&=\ln2K\left(2B+\frac{3t\sqrt{N_1}}{4}\right)>0,
\end{align}
where ($a$) holds as $\mathrm{2}^{B+t}\gg1$ for the reason that the value of $B$ is relatively large and set $B>50$, without loss of generality. What's more, when $N>N_1$, $2^{h(N)}<K$, we have $\Xi_1(N)>0$ with $N>N_1$ based on the discussion mentioned above. In summary, $\Xi_1(N)$ is negative first and then positive with $N\geq1$. Accordingly, $\Xi(N)$ increases as $N$ increases initially and decreases afterwards.
Further, when $N$ approaches $1$ and infinity, we have
\begin{align}
&\Xi\left(1\right)=\left(\frac{K\ln2\left(B+\frac{1}{2}t\right)}{1+2^{B+t}}+1\right)\left(\frac{\mathrm{2}^{B+t}}{\mathrm{2}^{B+t}+1}\right)^K-1\approx\frac{K\ln2\left(B+\frac{1}{2}t\right)}{1+2^{B+t}}>0,
\label{Xi1}\\
&\lim_{N \to +\infty}\Xi\left(N\right)=-1<0.
\label{Xi2}
\end{align}

So effective throughput $T$ increases first and then decreases with $N$, i.e. $T$ is a quasi-concave function of the relaxed $N$. The maximum of $T$ will meet at $\Xi(N^{*})=0$.
\section{ Proof for Corollary 1}
\label{appendix4}
According to the derivative rule for implicit functions, the impact of any parameter $\chi$ on $N$  with $\Xi(N) = 0$ can be analyzed through
\begin{align}
\frac{dN}{d\chi}=-\frac{\partial \Xi/\partial \chi}{\partial \Xi/\partial N}.
\label{equation48}
\end{align}
From Theorem \ref{theorem2}, we already have $\frac{\partial \Xi}{\partial N}<0$ with $N=N^*$. Therefore, the sign of $\frac{dN}{d\chi}$ only depends on $\frac{\partial \Xi}{\partial \chi}$.
The changing trend of $N$ with main system parameters can be discussed as follows:

1) $B$: The partial derivative of $\Xi$ w.r.t. $B$ is
\begin{align}
\frac{\partial \Xi}{\partial B}=\frac{\ln2H(N)^K}{N\left(2^{h(N)}+1\right)}\left(2K+\frac{\lambda(N)\left(K-2^{h(N)}\right)}{N\left(2^{h(N)}+1\right)}\right),
\label{equation51}
\end{align}
where $2K+\frac{\lambda(N)\left(K-2^{h(N)}\right)}{N\left(2^{h(N)}+1\right)}=\frac{K\left(\Xi_1(N)+\frac{\ln2}{4}Kt\sqrt{N}\right)}{\lambda(N)}>0$ as $\Xi_1(N)>0$ when $N=N^*$. In consequence, $\frac{dN}{dB}>0$ at $N=N^*$, i.e. the optimal blocklength $N^*$ increase with $B$.

2) $K$: The partial derivative of $\Xi(N)$ w.r.t. $K$ is
\begin{align}
\frac{\partial \Xi}{\partial K}&=\frac{\lambda(N)H(N)^K}{N\left(2^{h(N)}+1\right)}\left(\frac{1}{K}+\ln H(N)\right)+\ln H(N)H(N)^K\nonumber\\&\overset{(a)}{=}\frac{\left(1-H(N)^K\right)+\ln(H(N)^K)}{K},
\label{equation52}
\end{align}
where ($a$) holds for $\left(\frac{\lambda(N)}{N\left(2^{h(N)}+1\right)}+1\right)\left(H(N)\right)^K=1$ as $\Xi(N^*)=0$. It can be easily deduced that $\frac{\partial \Xi}{\partial K}$ increases with $H(N)^K\in\left(0,1\right)$, result in $\frac{\partial \Xi}{\partial K}<0$.

\section{ Proof for Corollary 2}
\label{appendix5}
According to the chain rule of derivative, the impact of any parameter $\chi$ on $T^*\triangleq T\left(N^*,\chi)\right)$ can be analyzed through
\begin{align}
\frac{dT^*}{d\chi}=\frac{\partial T}{\partial N}\bigg|_{N=N^*}\times \frac{dN^*}{d\chi}+\frac{\partial T}{\partial \chi}\bigg|_{N=N^*}=\frac{\partial T}{\partial \chi}\bigg|_{N=N^*},
\label{equation55}
\end{align}
where the second equality follows from the fact that $\frac{\partial T}{\partial N}\bigg|_{N=N^*}=0$ according to Theorem \ref{theorem2}. Thus the impact of the system parameters on $T(N^*)$ is then discussed as follows:

1) $B$:
\begin{align}
\frac{\partial T}{\partial B}=-\frac{1}{N}\left(\left(\frac{\ln2KB}{N\left(2^{h(N)}+1\right)}+1\right)H(N)^K-1\right),
\label{equation56}
\end{align}
if $N=N^*$, then we can know $\frac{\partial T}{\partial B}>-\frac{1}{N}\Xi(N^*)=0$ from (\ref{equation24}).

2) $K$:
\begin{align}
\frac{\partial T}{\partial K}=-\frac{B}{N}H(N)^K\ln H(N)>0.
\label{equationTK}
\end{align}

3) $\bar\epsilon$ or $\bar\delta$:
the partial derivative of $T$ about $t$ can be represented as
\begin{align}
\frac{\partial T}{\partial t}=-\frac{B}{N}\frac{\ln2KH(N)^K}{\sqrt{N}\left(2^{h(N)}+1\right)}<0,
\end{align}
thus we will obtain the conclusion that $T^*$ increases with $\bar\epsilon$, $\bar\delta$ as both $\bar\epsilon$ and $\bar\delta$ decrease with $t$.
\end{appendices}

\renewcommand{\baselinestretch}{1.1}

\end{document}